\def\BibTeX{{\rm B\kern-.05em{\sc i\kern-.025em b}\kern-.08em
    T\kern-.1667em\lower.7ex\hbox{E}\kern-.125emX}}
\newif\if@restonecol
\definecolor{boxshade}{gray}{0.2}
\newcommand{\cut}[1]{{}}
\begin{document}


\title{TransEdge: Supporting Efficient Read Queries
\\Across Untrusted Edge Nodes}


\author{Abhishek Singh}
\affiliation{%
  \institution{UC Irvine}
}
\email{abhishas@uci.edu}

\author{Aasim Khan}
\affiliation{%
  \institution{UC Santa Cruz}
}
\email{aashkhan@ucsc.edu}

\author{Sharad Mehrotra}
\affiliation{%
  \institution{UC Irvine}
}
\email{sharad@ics.uci.edu}

\author{Faisal Nawab}
\affiliation{%
  \institution{UC Irvine}
}
\email{nawabf@uci.edu}

\begin{abstract}

We propose Transactional Edge (TransEdge), a
distributed transaction processing system for \cut{globally-scalable}
untrusted environments such as edge computing \cut{and blockchain}
systems. What distinguishes TransEdge is its focus on
efficient support for read-only transactions. TransEdge allows reading from
different partitions \cut{ {\color{red}{(or permissioned blockchains)}}} consistently using
one round in most cases and no more than two rounds in the worst
case. 
TransEdge
design is centered around this dependency tracking scheme including
the consensus
and transaction processing protocols. 
Our performance evaluation shows that TransEdge's snapshot read-only transactions achieve an $9$--$24\times$ speedup compared to current byzantine systems.
\end{abstract}


\maketitle

\section{Introduction}\label{sec:intro}


In Global-Edge Data
Management (GEDM), edge nodes have the ability to
participate in the storage and computation of data. The goal is to bring data closer to users for faster access. This,
however, creates a number of complexities that are not incurred in
current cloud-based databases.
The main challenge in GEDM is that edge nodes \emph{cannot be
trusted} since they are operated by users or third-party providers, and may run on commodity edge hardware that is vulnerable to breaches. 
This requires adopting a stringent
fault-tolerance guarantee of tolerating arbitrary and malicious
failures, i.e., byzantine failures~\cite{lamport1982byzantine,pease1980reaching}. 

To address this challenge, recent
protocols propose hierarchical byzantine fault-tolerant (BFT)
systems, also referred to as \emph{permissioned blockchain}.
For example, Blockplane~\cite{nawab2019blockplane}, ResilientDB~\cite{gupta13resilientdb},
ChainSpace~\cite{al2017chainspace}, and others~\cite{amir2010steward,amiri2019parblockchain}, divide the data
into many partitions, where each partition is handled by a cluster of
nodes that are close to each other. Each one of these clusters runs a
BFT protocol, such as PBFT~\cite{castro1999practical}, to commit transactions within the
cluster and perform inter-cluster coordination via benign protocols
such as Two-Phase Commit~\cite{weikum2001transactional}.

Hierarchical BFT protocols perform well as they mask byzantine
behavior locally and use a benign protocol for wide-area
coordination. However, a limitation shared across all these existing
works is that they present a solution for general (read-write) database
transactions and do not pay special attention to read-only
transactions. Read-only transactions make up most of Internet traffic
where it is reported that more than 99\% of modern applications'
workload consists of read-only queries~\cite{bronson2013tao}. Exploiting the read-only property of transactions can bring significant performance benefits.

We propose Transactional Edge (TransEdge), a
hierarchical BFT protocol that is designed to optimize the
performance of read-only transactions.  Database designs that are
centered around optimizing read-only transactions are common~\cite{schlageter1981optimistic,garciamolina1982,agrawal1987distributed,satyanarayanan1993efficient,spanner}.
However, these methods cannot be used in hierarchical BFT systems as they assume a benign fault-tolerance
model. \cut{To the best of our knowledge, TransEdge is the first efficient read-only transactions protocol that tolerates byzantine failures.}

TransEdge builds on the hierarchical BFT architecture similar to many other systems~\cite{nawab2019blockplane,gupta13resilientdb,amir2010steward,amiri2019parblockchain,al2017chainspace}. 
This enables the adoption of TransEdge's read-only techniques and insights to other hierarchical BFT systems.
Nodes are divided into clusters. Each cluster consists of
neighboring nodes that handle a mutually exclusive partition of the
data. Local transactions are committed within a cluster using the BFT-SMaRt
protocol~\cite{bft-smart-lib}. A Two-Phase Commit (2PC) protocol is built on top
of BFT-SMaRt to implement distributed transactions. 

\begin{sloppypar}
The novelty of TransEdge is the support of efficient snapshot read-only transactions. We
define an efficient snapshot read-only transaction as one that satisfies two
properties:
\begin{enumerate}[leftmargin=*]
 \item Commit-free: a read-only
transaction can be answered by a single node from each accessed partition. It does not incur the cost of the commit sequence of either the BFT
protocol within a cluster or the distributed transaction protocol
across clusters.
 \item Non-interference: a read-only transaction should not
interfere with \cut{on-going} read-write transactions. A
read-only transaction should not lead to blocking or aborting read-write
transactions---even if temporarily---and vice versa. 
\end{enumerate}
\end{sloppypar}

To provide efficient snapshot read-only transactions in a byzantine
environment, TransEdge proposes a novel dependency tracking
mechanism. This scheme augments traditional efficient read-only
transaction designs with Authenticated
Data Structures (ADS)~\cite{merkle1980protocols} to allow a node to report dependencies
in a trusted way. Specifically, each node can provide a proof of the
authenticity of the dependencies when reporting them to a client.

\begin{figure}[t]
\centering
\includegraphics[scale=.9]{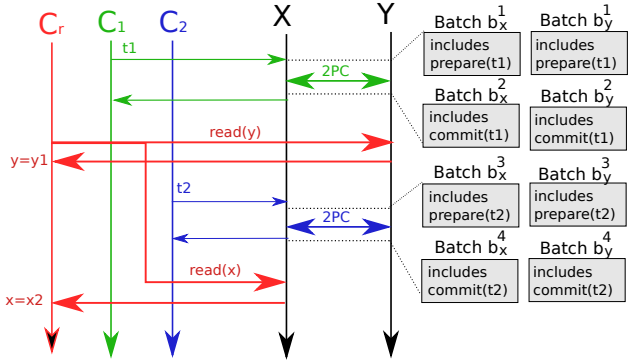}
\caption{A motivating example showing that simply doing local
read-only transactions could lead to inconsistent distributed
read-only transactions}
\label{fig:rot-motivate}
\end{figure}

TransEdge protocols require complex dependency tracking mechanisms because
using an ADS by itself is not sufficient to ensure data consistency across
partitions.
The example in Figure~\ref{fig:rot-motivate}
demonstrates that using Merkle Trees without additions might lead to
inconsistencies when reads are distributed across more than one
partition. This demonstration motivates our cross-partition
dependency tracking mechanism that we develop in the rest of this
section.  Consider \cut{, for this example,} a distributed read-only
transaction, $t_r$, that reads two data objects, $x$ and $y$, from
two partitions, $X$ and $Y$, respectively. Assume that concurrent to
$t_r$, there were two distributed read-write transactions, $t_1$ and
$t_2$. Both read-write transactions do the same thing---each writes a
new value for both $x$ and $y$. Transaction $t_1$ writes values $x_1$
and $y_1$ first, then transaction $t_2$ writes values $x_2$ and
$y_2$.

If transaction $t_r$ executes concurrent to $t_1$ and $t_2$,
then it must
return one of three snapshots to satisfy serializability: (1)~return the
initial snapshot where neither $t_1$ nor $t_2$ has committed, (2)
return $x_1$ and $y_1$, which correspond to the database state after
committing $t_1$, or (3) return $x_2$ and $y_2$, which correspond to
the database state after committing $t_2$. However, using Merkle
Trees without additions can lead to returning an inconsistent read,
such as returning the values $x_1$ and $y_2$.  For this to happen,
assume that $t_1$'s commit record was in batch 2 in $X$ and in batch
2 in $Y$. Also, assume that $t_2$'s commit record was in batch 4 in
$X$ and in batch 4 in $Y$. If $t_r$ started while $t_2$ is sending
the commit messages to participants, it is possible that $t_r$ reads the state of $X$ as of batch 4, while reading the state of $Y$
as of batch 2. This leads to an inconsistent read-only transaction.

We begin this paper with an overview and background of TransEdge
(Section~\ref{sec:background}). 
%
We propose TransEdge design and transaction processing protocols in
Section~\ref{sec:txn}, and propose TransEdge optimized snapshot read-only
transaction processing protocol in Section~\ref{sec:rot}.  An
experimental evaluation is presented in Section~\ref{sec:eval}.
Related work is presented in Section~\ref{sec:related} and
Section~\ref{sec:conclusion} concludes the paper.

%
%
%
%
%
%


\section{TransEdge System Model}\label{sec:background}

\textbf{System Model.}
Global-Edge Data Management (GEDM) is a data management model that
aims to utilize both cloud and edge resources. GEDM aims to overcome
wide-area latency by leveraging the resources at the edge of the
network. These edge resources can take the form of private, edge, and micro
datacenters which are clusters of machines. These infrastructures are typically
maintained at the edge of the network by third-party organizations
like Internet service providers.

\textbf{Security and trust model.}
We adopt a byzantine fault-tolerance
(BFT) model where a node might act in arbitrary and malicious ways.
We assume that the number of such malicious nodes is bounded by a
number $f$. We build upon prior work on BFT systems where clusters of
machines are used to mask $f$ failures by replicating and
coordinating across $3f+1$ nodes~\cite{castro1999correctness}.


\textbf{Data model.}
Data is partitioned across edge nodes and each partition is
replicated across clusters of $3f+1$ nodes, where $f$ is the number of tolerated
byzantine failures. We assume that a clustering protocol is utilized to form the partitions with such a guarantee, similar to prior work in hierarchical BFT protocols~\cite{gupta13resilientdb,nawab2019blockplane,amiri2019parblockchain,hellings2021byshard}. More details about node grouping is presented in Section~\ref{sub:related-byz}. Coordinating access and data operations is
performed via a hierarchical BFT architecture, where transactions
local to a partition are performed within the corresponding cluster.
These local transactions are committed via BFT protocols such as
BFT-SMaRt. Inter-cluster operations are performed via benign protocols
(\emph{e.g.}, Two-Phase Commit) that are layered over the BFT
replication layer~\cite{nawab2019blockplane,gupta13resilientdb,amir2010steward,amiri2019parblockchain,al2017chainspace}.
Due to spatial (geographic) locality, inter-cluster operations are typically across nearby clusters (e.g., it is typical that a user interacts with users who are geographically close to them which leads to data and operations to have such geographic locality).



\textbf{Interface.} A client running a transaction first creates a Transaction object and sends \textsf{read} operations to access data from TransEdge nodes. \textsf{write} operations are buffered at the client.
When the transaction is ready, a \textsf{commit} request groups read and write operations and sends them to a TransEdge node to be committed. TransEdge supports efficient snapshot read-only transactions with the
interface \textsf{read} and \textsf{commit-rot} which executes a special read-only transaction algorithm. Each edge node has
a unique public/private key that it uses in all communications with
the other edge nodes.

\cut{
TransEdge consists of a transaction processing layer and a replication
layer. The transaction processing layer receives read and transaction
begin and end commands. (Write operations are buffered at the
client.) The transaction processing layer batches transactions and
commits them to the BFT SMR log by calling a \textsf{commit} call to
the replication layer. In the rest of the paper, we provides the
implementation details of these layers: GBFT distributed consensus is
used to implement the replication layer (Section~\ref{sec:gbft}),
distributed transaction processing is performed by Two-Phase
Commit~(2PC) (Section~\ref{sec:txn}), and snapshot read-only transactions are
performed by utilizing byzantine-replicated authenticated data
structures and dependency tracking (Section~\ref{sec:rot}).
}

\section{Read-Write Transaction Processing}
\label{sec:txn}

\cut{In this section, we present the transaction processing protocols for
read-write transactions in TransEdge.} TransEdge's Read-write transaction
processing protocol facilitates dependency tracking
that we later use for efficient read-only transaction processing. We
only provide the
design of read-write transactions in this section and then present
the snapshot read-only transaction processing protocols in Section~\ref{sec:rot}.


\begin{figure*}[!t]
\begin{center}
\includegraphics[scale=1.0]{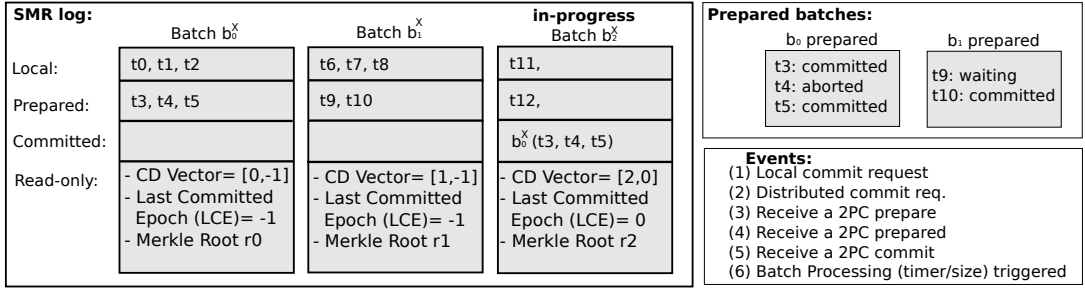}
\caption{An example of the state of a leader node.} 
\label{fig:state-2pc}
\end{center}
\end{figure*}

\subsection{System Model and Overview}
\label{sub:model}

Data is
partitioned into $n$ partitions. The state of each partition is
maintained via a BFT State-Machine Replication (SMR) log. This replication log is managed by a cluster of edge machines. Each cluster consists of $3f+1$ edge nodes where up to $f$ nodes can be malicious (i.e., byzantine). In our
case, the BFT SMR is implemented by BFT-SMaRt~\cite{bft-smart-lib}.

A leader writes data to the SMR log in batches. Batches are written
one-by-one, \emph{i.e.}, a leader writes a batch only if the previous
batch is already written. Each batch consists of multiple segments (Figure~\ref{fig:state-2pc}):
\begin{enumerate}[leftmargin=*]
\item Local transactions segment, which includes local
transactions. Local transactions are defined as transactions with read and write operations on keys that are all local to the cluster where they are being sent.
\item Prepared distributed transactions segment, which includes
distributed transactions that are Two-Phase Commit (2PC) prepared but not committed yet.
Distributed transactions are defined
as transactions whose read or write set contains keys that are not local to one cluster.
\item Committed distributed transactions segment, which includes
distributed transactions that are 2PC committed. 
\item Read-only segment, which includes (i)~a conflict-dependency
(CD) vector that tracks dependencies on transactions at other
partitions, (ii)~a Last Committed Epoch (LCE) number that represents the
largest batch number where prepared transactions are committed (the LCE also serves as a version number for the keys committed in the batch as we will discuss later)
 and (iii)~a Merkle Tree root which is used to certify the
integrity of committed transactions. The read-only segment is
updated and written while processing local and distributed
transactions, but their utility is in
enabling efficient snapshot read-only transactions as we show later
(Section~\ref{sec:rot}).
\end{enumerate}
A batch is represented with the notation $b^X_i$, which denotes the
$i^{th}$ batch in partition $X$. The subscript $i$ serves as a timestamp for each batch in the SMR log. Each segment is represented by
appending the segment's name to the batch---for example,
$b^X_i$.local represents local transactions in $b_i^X$.

An example of the state of a leader edge node is shown in
Figure~\ref{fig:state-2pc}. The SMR log contains two written batches
($b_0$ and $b_1$) and also shows the \emph{in-progress batch} ($b_2$)
that is being constructed to be written to the SMR Log. Batch $b_0$
contains three local transactions---$t_0$, $t_1$, and $t_2$. Local
transactions are considered committed as soon as the batch is
written to the SMR log. Batch $b_0$ also contains three distributed
transactions---$t_3$, $t_4$, and $t_5$---in the prepared segment.
Each transaction in the prepared segment is prepared as of batch
$b_0$---the term \emph{prepared} here corresponds to the prepared
state of 2PC which guarantees the property that no two conflicting
transactions can be prepared at the same time.) These prepared
transactions are waiting to hear from the other partitions/leaders to
know whether to commit or abort.

The leader tracks the
state of prepared batches in a data structure called \emph{prepared batches}
shown in Figure~\ref{fig:state-2pc}. Once all the transactions in the
oldest prepared batch are ready (committed or aborted), the leader
adds these transactions to the committed segment of the next batch
(the in-progress batch). In the figure, the prepared transactions of
$b_0$ are ready, and thus are added to the committed segment of batch
$b_2$. Prepared transactions are not considered committed until the
batch with the corresponding committed segment is written to the SMR
log. 

\cut{
TransEdge adopts an optimistic concurrency control model. A client
sends read requests to the leader, but writes are buffered. When the
client finishes all operations it sends a \emph{commit request} to
the leader. If the transaction is local (it only accesses data
objects belonging to a single partition), the leader commits the
transaction by writing them in the local segment of the next batch.
Then, it responds to the client with a proof of its commitment. The
proof is a collection of $f+1$ signed messages that the transaction
has committed. 

Distributed transactions are processed using a Two-Phase Commit (2PC)
layer on top of the SMR log. Each step of the 2PC protocol is
logged in the \textsf{prepared} and \textsf{committed} segments. (Proofs
from each partition are collected to mask byzantine failures in the
2PC process.) A transaction $t$ in the prepared segment of batch
$b_i$ means that the transaction is prepared as of $b_i$. Likewise, a
transaction is committed as of $b_j$ is it is in the committed
segment of that batch. For example, in Figure~\ref{fig:state-2pc},
transactions $t_3$, $t_4$, and $t_5$ are prepared as of batch $b_0$
and committed as of batch $b_2$.
}


\vspace{-0.1in}
\subsection{Intra-cluster Transaction processing}
\label{sub:local-rw}

\cut{The leader of a partition commits local transactions to the SMR log in the
\textsf{local} segment of batches, $b_i^X.$\textsf{local}.} TransEdge
processes transactions in an Optimistic Concurrency Control approach \cite{kung1981optimistic,schlageter1981optimistic}.
Clients' read requests are served from any node in the cluster. Responses to clients must include the LCE of the batch which the key was read from. The client buffers the writes in the transaction object. When it is time to commit the transaction, the client sends the transaction object containing the read and write sets to the leader. The leader
receives commit requests from clients that include the read and
write-sets (event 1 in
Figure~\ref{fig:2pc-scenario}). The read-set is the set of read data
objects along with the read values. The write-set is the set of data
objects to be written.  When a transaction, $t$, is received by
the leader, the leader verifies that it can commit and appends it to
the in-progress batch. This verification involves checking the
following conflict detection rules, which we will also use later for
distributed transactions:

\begin{definition}
\label{def:conflicts}
\textbf{(Conflict detection rules.)}
A transaction $t$ is added to an in-progress batch only if it does not
conflict with the following:
\begin{enumerate}[leftmargin=*]
  \item Previous batches: this is checked by verifying
that all reads in the read-set of $t$ are not overwritten by committed local
or distributed transactions in previous batches.
  \item In-progress batch: this is checked by verifying
that every transaction $t'$ in the local, prepared, or committed segments
do not conflict with $t$.
  \item Prepared transactions: this is checked by
verifying that every transaction $t'$ in prepared batches
(that are not committed) does not conflict with $t$.
\end{enumerate}
Otherwise, $t$ is aborted or restarted. 
\end{definition}

Transactions that pass the conflict detection rules are added to the in-progress batch of the log. The batch is replicated within the cluster by the leader. Depending on whether the transactions are local or distributed they are added to the specific segments 
of the current batch of the SMR log. The cluster replicas perform the same conflict detection on the transaction batch and then add the transactions to their SMR log. All cluster replicas form a Merkle tree of the local and distributed transactions once conflict detection is complete. A consensus process is performed on the merkle root of the transactions added to the batch using the BFT-SMaRt protocol\cite{bft-smart-lib}. At the end of the consensus $f+1$ signatures are collected from the replicas and are added to the batch. Once the consensus is complete the local transactions are considered committed to the batch. If the consensus fails, the transactions are aborted. 
Local (intra-cluster) transactions are considered committed after their batch is
written to the SMR log. Writing to the SMR log entails coordinating
with other replicas using the PBFT protocol. Other replicas, when
engaged in writing the batch to the SMR log, ensure that the local
transactions are in fact allowed to commit using the rules above
(Definition~\ref{def:conflicts}). This guarantees that a malicious
leader cannot commit transactions that are inconsistent with the state of the SMR log of other replicas in the cluster. Additionally, the batch generated by following the conflict definition in definition~\ref{def:conflicts} ensures that the transactions are conflict serializable. 


\cut{This segment of a log entry  consists of a \textsf{coordinator-prepare} record that keeps track of the prepared transactions in the cluster for any given batch.}

\begin{figure}[!t]
\begin{center}
\includegraphics[scale=.6]{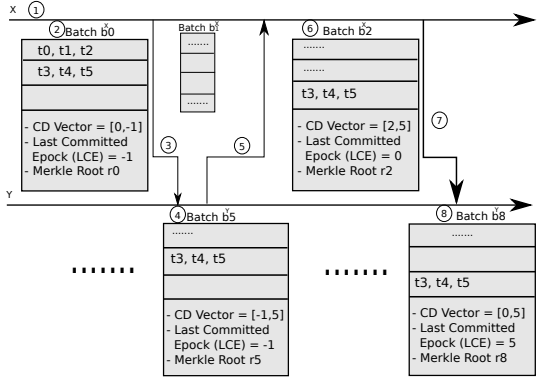}
\caption{An example of committing distributed transactions across two
partitions, X and Y.}
\label{fig:2pc-scenario}
\end{center}
\end{figure}

\begin{sloppypar}
\end{sloppypar}
\subsection{Distributed Transaction Processing}
\label{sub:dist-rw}

A distributed transaction accesses more than one partition. We use a
2PC-based approach to commit distributed transactions by having the
2PC protocol as a layer on top of the consensus protocol.
Specifically, each step of the 2PC protocol is verified and persisted
through the BFT protocol.  Prepared and committed transactions will be part of the prepared and committed segments, respectively (Figure~\ref{fig:state-2pc}.)
In the following, we present the distributed transaction processing
protocol. We use the example in Figure~\ref{fig:2pc-scenario} to
illustrate each step. 
\subsubsection{\textbf{Client protocol}}
\label{subsub:client}
In TransEdge, a client creates a Transaction Object that reads data from a corresponding partition and buffers writes. Read or write
objects might belong to more than one partition. When the client
is ready to commit,
it picks one of the clusters of the accessed
partitions to act as a whole as the \emph{coordinator} of the transaction. Then,
it sends a commit request to the coordinator containing the read and write set (step 1 in
Figure~\ref{fig:2pc-scenario}.) The coordinator then
drives the commitment of the transaction. Each step that is performed by the coordinator cluster is verified using the underlying BFT protocol. This ensures that a malicious leader or node in the coordinator cluster would not be able to lie when communicating 2PC steps with other accessed clusters. Likewise, the other clusters participating in 2PC also verify all 2PC steps with the underlying BFT protocol.

When the client communicates with the coordinator cluster, it can send the request to $f+1$ nodes in the cluster to ensure that malicious nodes would not drop the message. Once the request is written in the BFT cluster, the next steps will be driven by the cluster as a whole which will prevent a malicious leader or node from hindering the progress of the 2PC protocol. Communication between clusters for 2PC steps can also adopt a similar strategy by making $f+1$ nodes send relevant 2PC messages to $f+1$ nodes in other clusters for each step.
%

\subsubsection{\textbf{2PC prepare}}
When the coordinator receives a commit request for a transaction $t$,
it verifies that $t$ can prepare. This is the 2PC prepare phase and is described in section~\ref{sub:local-rw}. It is shown in step 2 in Figure~\ref{fig:2pc-scenario}.
\cut{This is done by checking the
conflict detection rules (Definition~\ref{def:conflicts}). Then, the
leader constructs a \textsf{coordinator-prepare} record that includes
the transaction's information (read and write sets and the request
received from the client). During this 2PC prepare phase, the coordinator
prepares keys local to its partition described as follows. The \textsf{coordinator-prepare} is added
to the \textsf{prepared} segment of the in-progress batch, which will
eventually be written to the SMR log (step 2 in
Figure~\ref{fig:2pc-scenario}.)}
After the transaction is written to the SMR log,
the leader sends the \textsf{coordinator-prepare} with a proof that
it is part of the SMR log ($f+1$ signatures collected during \textsf{coordinator-prepare}) to the leaders of the
accessed partitions (step 3 in Figure~\ref{fig:2pc-scenario}.) 

\subsubsection{\textbf{2PC prepared}}
When a leader receives a \textsf{coordinator-prepare} for a
distributed transaction, it executes the intra-cluster transaction processing protocol described in section \ref{sub:local-rw}.
It then constructs a \textsf{prepared} record and adds it to the \textsf{prepared}
segment of the next batch (step 4 in Figure~\ref{fig:2pc-scenario}.)
Each replica, while writing the batch, verifies the authenticity of
the \textsf{prepare} record and verifies that it can commit using
Definition~\ref{def:conflicts}. Once the batch with the
\textsf{prepared} record is written, the leader sends it to the
coordinator (step 5 in Figure~\ref{fig:2pc-scenario}.) The message
includes the \textsf{prepared} record signed by $f+1$ nodes in the
partition. The set of prepared transactions is also added to the prepared batches data structure.  

\subsubsection{\textbf{2PC commit at the coordinator}}
When the coordinator receives \textsf{prepared} messages from all the
participating partitions, it declares whether the transaction commits
or aborts. If all \textsf{prepared} messages are positive, then the
transaction commits. Otherwise, it aborts. The
coordinator constructs a \textsf{commit} record that includes the
collected \textsf{prepared} messages. The coordinator, then, writes
the \textsf{commit} record in the prepared batches structure for its
corresponding batch $b_p$
(Figure~\ref{fig:state-2pc}.) Once all the other transactions in
$b_p$ are ready, they are added to the committed segment of the
next in-progress batch $b_i$ (step 6 in Figure~\ref{fig:2pc-scenario}.)
When $b_i$ is written to the SMR log, the distributed transaction is
considered committed. The leader then updates the LCE segment of the batch.  Afterward, the leader sends the commit
record---along with $f+1$ signatures---to the other partitions'
leaders (that were accessed by the transactions) and to the client
(step 7 in Figure~\ref{fig:2pc-scenario}.)  When a leader receives a
\textsf{commit} record, it updates the corresponding batch in the
prepared batches data structure. Once the corresponding prepared batch is ready and is the next one to be committed, it is added in the committed segment of the next batch and committed to the SMR log (step 8 in
Figure~\ref{fig:2pc-scenario}.) In the next section, we provide the updates that need to be applied to the Read-Only part of the batch as part of commitment.

\subsubsection{2PC transactions across more than two clusters}
\label{subsub:nclusters}
The distributed transaction processing protocol above can be performed with more than two clusters. To illustrate, consider extending the scenario shown in Figure~\ref{fig:2pc-scenario} with one more cluster/partition called Z. If the distributed 2PC transactions in $B_0^X$ access records in Z as well, then the changes to the scenario are as the following. The \textsf{coordinator-prepare} message (step 3) and the commit record (step 7) is sent to both $Y$ and $Z$. Also, the transactions prepare (steps 4 and 5) and commit (steps 7 and 8) are performed at both $Y$ and $Z$. 

Another aspect of having more than two clusters is that distributed 2PC transactions may start at different clusters. Consider the previous setup in Figure~\ref{fig:2pc-scenario} with an additional cluster $Z$. A distributed 2PC transaction $t_6$ that accessed both $Z$ and $Y$ may start at $Z$ as the coordinator concurrently with transactions $t_3$, $t_4$, and $t_5$. In this case, $t_6$ is first prepared in a batch at $Z$. Then, a \textsf{coordinator-prepare} is sent to $Y$. When $Y$ receives the message, it decides whether it can be committed. For example, if the message for $t_6$ is received after receiving the message for $t_2$, $t_3$, and $t_4$, cluster $Y$ checks for conflicts including conflicts between $t_6$ and the prepared transactions $t_2$, $t_3$, and $t_4$. Then, processing $t_6$ continues by sending back the \textsf{prepared} message, and then for $Z$ to send back the \textsf{commit} message.

\subsubsection{Transaction Aborts}
\label{subsub:aborts}

{In TransEdge, aborts may be caused due to various factors. First, a transaction may be aborted due to conflicts. During conflict checks, using the conflict detection rules we outlined above, if a transaction $t$ has a conflict, then it is marked as aborted. Transaction $t$'s write-set is not applied to storage and it does not impact future transactions. The user client may request to abort a transaction while it is processing. However, once the commit request is sent to a cluster, a user client cannot request for the transaction to be aborted. Also, TransEdge does not abort transactions due to a timeout. Because each 2PC participant is represented as a cluster of machines, we assume that there is always a quorum of nodes in each cluster that can make progress, and therefore, an abort due to timeout is unnecessary.

\subsection{Commit Updates in the Read-Only Segment}
\label{sub:batch}
At the time of committing a batch of transactions (step 8 in
Figure~\ref{fig:2pc-scenario}), the following updates need to be applied to the Read-Only part of the batch (these updates---although not impacting the processing of read-write transactions---affect the processing of read-only transactions that we present in Section~\ref{sec:rot}. We present more details about computing these updates in Section~\ref{sec:rot}):
\cut{
Each leader maintains the following threads that keep track of each stage of transaction processing:
\begin{enumerate}
    \item \emph{batching worker}. Receives transactions and initiates intra-cluster transaction processing (section \ref{sub:local-rw}
    \item \emph{2pc-prepare worker}. Sends 2PC prepare requests to remote cluster leaders. Updates a \textsf{prepared-batches} data structure to keep track of the sequence of prepare messages by using the batch ID.
    \item \emph{2pc-prepared worker}. Responds to prepared messages from remote cluster leaders. It updates \textsf{prepared-batches} data structure to keep track of prepared messages for specific batch ID.
    \item \emph{2pc-commit worker}. Ensures that batches are committed sequentially. For eg. even if all of batch $b_j$'s transactions were prepared by remote leaders, $b_j$ will not be able to commit unless all batches $b_i$ where $i<j$ are committed.
\end{enumerate}
}
\cut{(called the \emph{batching thread})
that takes care of constructing batches and then driving their
commitment to the SMR log.  Initially, the in-progress batch is empty. Threads
receiving the transactions (called \emph{worker
threads}) add the transactions to the in-progress batch by sending
them to the batching thread. These include transactions in the local
and prepared segments.}
\cut{
A new transaction is added into a batch only if it satisfies conditions specified in definition \ref{def:conflicts} which ensures that a transaction added to a batch is conflict serializable. Secondly, the transaction is serializable with the transactions in the existing in-progress batch. And finally, that there are no existing locks on any of its objects. This ensures that batch processing does not get into deadlocks and guarantees serializability.}
\cut{Periodically, the batch processing thread wraps up the
construction of the batch and then commits it via the BFT protocol.
Completing the construction of the batch includes updating the following:}
\begin{enumerate}[leftmargin=*]
  \item \emph{Committed segment and Last Committed Epoch (LCE): }
The cluster replica observes the \textsf{prepared-batches} data
structure (Figure~\ref{fig:state-2pc}). Specifically, it checks
whether the earliest batch, $b_i$, is ready---it has no pending
transactions. If $b_i$ is ready, the transactions in it are added to
the committed segment of the in-progress batch, the Last Committed
Epoch (LCE) number is updated to $i$
and $b_i$ is removed from the prepared batches data structure.
The LCE represents the id of the most recent batch that committed as of the current in-progress batch. 
  \item \emph{Conflict Dependency (CD) Vector: } The CD vector
encodes information about the dependencies from the committed batch
to the batches of other partitions/clusters. The CD vector at a cluster
contains $n$ entries, where $n$ is the number of partitions. Each partition is
represented by a number in the CD vector. For example, if the CD
vector contains the number $i$ for partition $X$, this means that
there is a dependency on transactions at $X$ up to batch $b_i^X$. An entry in the CD Vector can track the dependencies of multiple 2PC transactions by taking a coarse-granularity approach. 2PC transactions in a batch are all reading from the same data state that represents the committed transactions of the received batches up until that point. This makes all the transactions in a batch have the same set of potential dependencies to the state represented by the received batches up until that point. More information about dependency tracking and the representation of dependencies is presented in Section~\ref{sub:rot-dist}. 
  \item \emph{Merkle root:} The Merkle Tree \cut{---which is used for
snapshot read-only transactions and will be introduced in more detail in
Section~\ref{sec:rot}---}is updated with the write-sets of the transactions in the local, prepared and committed segments. \cut{Then, the new Merkle Tree root is added to the batch.}
This Merkle Tree root represents the state of the Merkle Tree that
includes all local, prepared and committed transactions up to the current
batch. The Merkle tree is updated by all replicas within a cluster while processing read-write transactions. 
\end{enumerate}

\subsection{Comparison with Hierarchical 2PC/BFT}
\label{sub:compare}

TransEdge extends the literature of hierarchical 2PC/BFT systems, which are systems that perform 2PC across clusters, where each cluster maintains a shard of the data~\cite{gupta13resilientdb,al2017chainspace} (similar to these systems are 2PC/Paxos systems that utilize paxos as the underlying consensus layer~\cite{spanner}). Specifically, we consider a baseline inspired by these hierarchical 2PC/BFT systems that we call \emph{2PC/BFT} and use in our evaluation study. In 2PC/BFT, similar to TransEdge, each cluster acts as a 2PC participant and each step taken by the participant cluster is first validated by the underlying BFT consensus/agreement protocol. The main differentiator of TransEdge is the efficient support of read-only transactions. This leads to the main design differences such as having to maintain the Read-only segment of batches, maintaining a CD Vector, a LCE number, and a merkle tree (Figure~\ref{fig:state-2pc}). In terms of algorithms, what TransEdge has additional support to maintain these added structures (such as CD Vectors), new algorithms for read-only transactions that we discuss in Section~\ref{sec:rot}, as well as constraints on how distributed read-write transactions are performed to enable efficient read-write transactions (Section~\ref{subsub:dependency}). 2PC/BFT systems, on the other hand, do not maintain these additional structures, do not have additional constraints on the original 2PC protocol, and do not have additional special algorithms for read-only transactions. However, they do not have the efficient support of read-only transactions that TransEdge provides. We discuss further comparisons with related work in Section~\ref{sec:related}.

2PC/BFT systems are used as a solution to the problem of performing geo-distributed coordination across byzantine nodes that are distributed around large geographic regions. 2PC/BFT groups nodes together into clusters, each maintaining a shard. Then, a cluster can act independently on behalf of that shard of data. This includes coordinating with other shards which can be done via the 2PC protocol. Therefore, the 2PC/BFT protocol enables clustering data into smaller groups of nodes that are near each other, and then utilizes an intra-cluster BFT instance to validate steps that can be taken by the cluster as a whole. This enables using benign (non-byzantine) algorithms across clusters since each step is validated internally. This led some prior work to adopt this paradigm, such as ResilientDB~\cite{gupta13resilientdb} and ChainSpace~\cite{al2017chainspace}. This concept of layering 2PC over consensus is also used in systems such as Spanner~\cite{spanner} which utilize paxos as the consensus layer since they do not consider byzantine failures.

\subsection{Correctness}
\label{sub:rw-correctness}

TransEdge guarantees serializability~\cite{bernstein1987concurrency}. We present now a proof sketch of this guarantee. 

We define \emph{transaction commit points} (TCP). A transaction commit point for a distributed transaction $t$ is the time it commits (i.e., the coordinator node has received all positive votes). For a local transaction $t$, the transaction commit time the batch containing $t$ has been written to the SMR log (i.e., the leader received enough votes for its commitment.)
We denote the transaction commit time for a transaction $t$ as $TCP(t)$.

Second, conflicts between transactions are defined as follows~\cite{bernstein1987concurrency}: (1)~write-read (wr) conflicts: a wr conflict from transaction $t_1$ to $t_2$ exists if transaction $t_1$ has a write operation on an object $o$ and $t_2$ reads the object version written by $t_1$. (2)~read-write (rw) conflicts: a rw conflict from transaction $t_1$ to $t_2$ exists if transaction $t_1$ reads an object $o$ and $t_2$ overwrites the object version read by $t_1$. (3)~write-write (ww) conflicts: a ww conflict from transaction $t_1$ to $t_2$ exists if transaction $t_1$ writes to an object $o$ and $t_2$ overwrites the object version written by $t_1$.

\begin{lemma}
\label{lem:equal}
For any two transactions $t_1$ and $t_2$, where $TCP(t_1) = TCP(t_2)$, it is guaranteed that $t_1$ and $t_2$ do not conflict with each other. 
\end{lemma}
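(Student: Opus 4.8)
The plan is to prove the contrapositive: if $t_1$ and $t_2$ conflict, then $TCP(t_1)\neq TCP(t_2)$, and in fact one of them strictly precedes the other in time. A wr, rw, or ww conflict is always witnessed by a single data object $o$, and since every object belongs to exactly one partition, both $t_1$ and $t_2$ must be processed at the SMR log of the partition $X$ that owns $o$ --- a local transaction appears once, in the \textsf{local} segment of one batch, while a distributed transaction appears in the \textsf{prepared} segment of one batch $b^X_i$ and in the \textsf{committed} segment of a strictly later batch $b^X_j$. I would then split the argument according to whether $t_1$ and $t_2$ are local or distributed.

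The local/local case is the easy one. If $TCP(t_1)=TCP(t_2)$ and both are local, then, because a local transaction's commit point is the instant its batch is appended to the SMR log and batches of $X$ are appended one at a time, $t_1$ and $t_2$ lie in the \textsf{local} segment of the \emph{same} batch of $X$. But conflict-detection rule~(2) of Definition~\ref{def:conflicts} checks each transaction added to the in-progress batch against every transaction already present in its \textsf{local}/\textsf{prepared}/\textsf{committed} segments, so whichever of $t_1,t_2$ was appended second would have been aborted rather than committed --- a contradiction.

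For the cases involving a distributed transaction I would lean on two facts holding at partition $X$: (i)~the prepared-state property recalled in Section~\ref{sub:model}, that no two conflicting transactions are simultaneously prepared; and (ii)~rules~(2) and~(3) of Definition~\ref{def:conflicts}, which abort any newly added transaction that conflicts with a transaction currently in the in-progress batch or with a prepared-but-not-committed transaction. From~(ii), for a distributed transaction $t$ no transaction conflicting with $t$ can be appended to any batch of $X$ from $b^X_i$ through $b^X_j$ inclusive; combined with~(i), the prepared intervals of two conflicting distributed transactions at $X$ are disjoint. Since $TCP(t)$ for a distributed transaction falls, in wall-clock order, strictly after $b^X_i$ is written (the participant votes only after writing the \textsf{prepared} record) and strictly before $b^X_j$ is written (the \textsf{commit} record is placed only after the coordinator decides), any conflicting partner --- its sole batch at $X$ if it is local, or its own prepared interval at $X$ if it is distributed --- must lie strictly before $b^X_i$ or strictly after $b^X_j$, which orders the two commit points and makes them distinct.

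The step I expect to be the main obstacle is this last, distributed case: unlike local transactions, which collapse cleanly to ``same batch implies conflict detection fired,'' distributed transactions require pinning down exactly where $TCP(t)$ sits relative to $t$'s two appearances in $X$'s log, and then checking every direction of every conflict type. In particular, one must handle the corner case where a conflicting transaction commits entirely before the other is even prepared (so no conflict-detection rule fires on it), arguing via rule~(1) --- reads are validated against the committed state, so a transaction cannot have committed while reading a version that an already-committed transaction overwrote --- that the only admissible conflict directions there are wr and ww from the earlier transaction to the later one, both consistent with $TCP(\text{earlier}) < TCP(\text{later})$.
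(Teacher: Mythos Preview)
Your proposal is correct and lands on the same conclusion, but it takes a noticeably more elaborate route than the paper in the distributed case. The paper's proof splits into the same two top-level cases. For the local/local case it argues exactly as you do (same TCP implies same batch, and conflict-detection rule~(2) forbids conflicting transactions in one batch), adding only the trivial observation that two local transactions at different partitions share no objects and hence cannot conflict. For the case where at least one transaction, say $t_1$, is distributed, the paper gives a one-step argument: at the instant $TCP(t_1)$, every shard accessed by $t_1$ is in the \emph{prepared} state for $t_1$, and a prepared shard will neither prepare nor commit any transaction conflicting with $t_1$ until $t_1$ is finalized; since a conflicting $t_2$ must touch one of those shards, it cannot have its commit point at the same instant.

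Your approach instead localizes the conflict to a single partition $X$, tracks the prepared interval $[b^X_i,b^X_j]$ of the distributed transaction there, pins $TCP$ strictly inside that interval, and then shows any conflicting partner's relevant batch(es) at $X$ lie entirely outside it---thereby actually deriving a strict time ordering of the two commit points, which is more than the lemma asks. This buys you a head start on Lemma~\ref{lem:order} (the ordering lemma), and your explicit handling of the ``conflicting transaction committed before the other even prepared'' corner case via rule~(1) is more careful than the paper. The paper's version, by contrast, is shorter and avoids the batch-timeline bookkeeping by reasoning directly about the global prepared state at the single instant $TCP(t_1)$.
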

\begin{proof}
There are two cases:\\
(1)~both $t_1$ and $t_2$ are local transactions: if $t_1$ and $t_2$ are at the same shard (i.e., data partition), then they cannot conflict as having the same TCP implies they are both at the same batch. If they are in the same batch, the two transactions do not conflict as we do not include conflicting transactions in the same batch. If $t_1$ and $t_2$ are at different partitions, then they do not conflict as they do not have any operations in common.\\

(2)~One of the two transactions is distributed: assume that $t_1$ is a distributed transaction and $t_2$ can be either local or distributed. At the TCP of $t_1$, all the shards accessed by $t_1$ are already prepared (i.e., a prepared message is already received). This means that all prepared shards would not prepare or commit another transaction---local or distributed---before finalizing the commitment of $t_1$. If $t_2$ conflicts with $t_1$, then it would be in one of the prepared shards of $t_1$. This is a contradiction as $t_2$ cannot commit at the same TCP of $t_1$ in a $t_1$-prepared shard.

\end{proof}

\begin{lemma}
\label{lem:order}
For any two conflicting transactions $t_1$ and $t_2$, where there is a data conflict (i.e., read-write, write-read, and write-write~\cite{bernstein1987concurrency}) from $t_1$ to $t_2$, it is guaranteed that $TCP(t_1) < TCP(t_2)$
\end{lemma}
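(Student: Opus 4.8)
The plan is to argue by contradiction. Fix the object $o$ on which the conflict from $t_1$ to $t_2$ occurs and let $S$ be the (unique) partition/cluster that owns $o$. Suppose $TCP(t_1) \ge TCP(t_2)$; since $t_1$ and $t_2$ conflict, Lemma~\ref{lem:equal} rules out $TCP(t_1) = TCP(t_2)$, so in fact $TCP(t_2) < TCP(t_1)$, and I will derive a contradiction with the conflict detection rules of Definition~\ref{def:conflicts} (together with the 2PC property that no two conflicting transactions are ever prepared at the same time). Two bookkeeping facts about commit points will be used repeatedly: if $t$ is local to $S$, then $TCP(t)$ is exactly the moment $t$'s batch is written at $S$; if $t$ is distributed, then by time $TCP(t)$ the transaction is already \textsf{prepared} at every partition it accesses (in particular at $S$), while the \textsf{commit} record of $t$ is appended to a batch at $S$ strictly after $TCP(t)$.

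The write--read case is immediate. Here $t_1$ writes $o$ and $t_2$ reads the version of $o$ produced by $t_1$. Because reads are answered only from committed state (a read returns the LCE of the batch it came from), $t_1$'s write must already sit in a committed segment at $S$ before $t_2$ performs this read; and $t_2$ performs the read before issuing its commit request, hence strictly before $TCP(t_2)$. So $TCP(t_1)$ is no later than the time $t_1$ commits at $S$ (with equality when $t_1$ is local to $S$ and strict inequality when $t_1$ is distributed), which is no later than when $t_2$ reads $o$, which is strictly before $TCP(t_2)$; this contradicts $TCP(t_2) < TCP(t_1)$.

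The read--write and write--write cases are the heart of the proof and use the same argument. In both, $t_2$ has a write on $o$; let $m$ denote the first time that write is placed into an in-progress batch at $S$ --- the committing batch if $t_2$ is local to $S$, or $t_2$'s 2PC prepare batch if $t_2$ is distributed. At time $m$ the conflict checks of Definition~\ref{def:conflicts} were passed for $t_2$. I claim $t_1$ must already be committed at $S$ strictly before $m$. Indeed, if $t_1$ had not yet appeared at $S$ by time $m$, then when $t_1$ is later validated at $S$ it would see $t_2$'s write on $o$ already prepared or committed, so one of rules 1--3 would abort $t_1$, contradicting that $t_1$ commits. And if $t_1$ were still only \textsf{prepared} at $S$ at time $m$ (possible only if $t_1$ is distributed), then $t_1$ and $t_2$ would be two conflicting transactions prepared at $S$ simultaneously --- in the rw case $t_1$ reads $o$ while $t_2$ writes it, in the ww case both write it --- which rule 3 together with the 2PC prepare property forbids, so $t_2$ could not have prepared. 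Hence $t_1$ is committed at $S$ before $m$. It remains to chain: $TCP(t_1)$ is at most the time $t_1$ commits at $S$ (equal when $t_1$ is local, strictly less when distributed), and that time is before $m$, so $TCP(t_1) < m$; on the other side $m \le TCP(t_2)$, since $m = TCP(t_2)$ when $t_2$ is local to $S$, and $m$ is $t_2$'s prepare time at $S$ --- hence before $TCP(t_2)$ --- when $t_2$ is distributed. Thus $TCP(t_1) < TCP(t_2)$, contradicting $TCP(t_2) < TCP(t_1)$; this exhausts the three conflict types.

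The obstacle I expect is precisely this last chaining for distributed transactions: $TCP$ is a \emph{global} event (the coordinator collecting all votes), temporally sandwiched between a distributed transaction's prepare at $S$ and its commit record landing at $S$, so one must argue carefully that the per-shard event order at $S$ cannot disagree with the $TCP$ order. The 2PC ``no two conflicting transactions prepared at once'' property is what closes this gap --- it forces a conflicting $t_1$ to fully commit at $S$ before $t_2$ may even prepare there. A minor additional point to verify is that once $t_2$'s write on $o$ becomes visible at $S$ it never disappears (it only migrates from the prepared-batches structure into a committed segment), so that a later-arriving conflicting $t_1$ is always caught by one of rules 1--3.
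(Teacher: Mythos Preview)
Your proof follows essentially the same route as the paper's: contradiction, case analysis on the conflict type, and appeal to the conflict-detection rules (Definition~\ref{def:conflicts}) together with the 2PC prepare invariant at the shard holding $o$. The paper treats the three conflict types separately, whereas you merge rw and ww into one argument keyed on $t_2$'s write; the underlying reasoning --- in particular the case split on which of $t_1,t_2$ is first to prepare at the shard in the rw case --- is the same.

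There is one small gap in your merged rw/ww argument. You assert that if $t_1$ has not yet appeared at $S$ by time $m$, then when it later arrives ``one of rules 1--3 would abort $t_1$.'' This holds for rw (a stale read by $t_1$ is caught by Rule~1 if $t_2$ has committed, or by Rules~2--3 if $t_2$ is still in the in-progress batch or prepared), but it fails for ww when $t_2$ is already fully committed in a \emph{previous} batch before $t_1$ arrives: Rule~1 checks only the read-set, and Rules~2--3 concern only the in-progress batch and the prepared-batches structure, so a pure write--write against an already-committed transaction triggers none of them and $t_1$ commits. The fix is immediate --- in that scenario $t_1$'s write is applied after $t_2$'s, so the ww edge runs $t_2 \to t_1$, contradicting the hypothesis that it runs $t_1 \to t_2$ --- but you should state this explicitly rather than claim an abort. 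The paper avoids this by handling ww directly from the semantics of ``overwrites'' (if $t_2$ overwrites $t_1$'s version, $t_1$'s write was applied first, hence $TCP(t_1) < TCP(t_2)$), which is the cleaner route for that case.
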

\begin{proof}
Assume to the contrary that there is a conflict from $t_2$ to $t_1$ while $TCP(t_1) < TCP(t_2)$:\\
(1)~$t_2 \rightarrow_{wr} t_1$: If $t_1$ reads an object $o$ from $t_2$, then this implies that at the time of the read operation $time(r_1(o))$ the value being read is already committed. Therefore, $TCP(t_2) < time(r_1(o))$. Also, $t_1$ TCP is after performing all the read operations meaning that $time(r_1(o)) < TCP(t_1)$, which implies $ TCP(t_2) < TCP(t_1)$, which is a contradiction.\\

(2)~$t_2 \rightarrow_{ww} t_1$: If $t_1$ overwrites an object $o$ written by $t_2$, then this implies that at the time of the write operation $time(w_1(o))$ the value being overwritten is already committed. Therefore, $TCP(t_2) < time(w_1(o))$. Also, $t_1$ TCP is the same as the time of applying the write operations meaning that $time(r_1(o)) = TCP(t_1)$, which implies $ TCP(t_2) < TCP(t_1)$, which is a contradiction.\\

(3)~$t_2 \rightarrow_{rw} t_1$: If $t_1$ overwrites an object $o$ read by $t_2$, then either one of the following cases would apply: (3.a)~$t_1$ prepares before $t_2$: this is not possible because at the time $t_2$ attempts to prepare at the shard containing $o$, it will discover the conflict with $t_1$ and aborts. (3.b)~$t_2$ prepares before $t_1$: in this case, for $t_1$ to not abort, it needs to prepare after the commitment of $t_2$. However, in that case, $t_2$'s TCP is before $t_1$'s TCP, $TCP(t_2) < TCP(t_1)$, which is a contradiction.
\end{proof}

\begin{theorem}
TransEdge read-write transactions guarantee serializability.
\end{theorem}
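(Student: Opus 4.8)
The plan is to establish serializability through the classical serialization-graph (conflict-graph) characterization~\cite{bernstein1987concurrency}: an execution restricted to committed transactions is conflict-serializable if and only if its serialization graph --- the directed graph whose vertices are the committed transactions and whose edges $t_1 \to t_2$ record a write-read, read-write, or write-write conflict from $t_1$ to $t_2$ --- is acyclic. Equivalently, it suffices to exhibit a serial order over the committed transactions that is conflict-equivalent to the actual execution, and that is the route I would take.

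First I would take the serial order $S$ obtained by sorting all committed transactions by their transaction commit point $TCP(\cdot)$, breaking ties (transactions with identical $TCP$) arbitrarily. Lemma~\ref{lem:equal} guarantees that any two transactions sharing a $TCP$ do not conflict, so the arbitrary tie-breaking never reorders a conflicting pair; hence the relative order in $S$ of any conflicting pair is determined entirely by $TCP$ values. Next I would invoke Lemma~\ref{lem:order}: every edge $t_1 \to t_2$ of the serialization graph corresponds to a data conflict from $t_1$ to $t_2$, so $TCP(t_1) < TCP(t_2)$, which means $t_1$ precedes $t_2$ in $S$. Thus every conflict edge is oriented consistently with $S$, and a cycle $t_1 \to t_2 \to \cdots \to t_k \to t_1$ would force $TCP(t_1) < TCP(t_2) < \cdots < TCP(t_k) < TCP(t_1)$, a contradiction; the serialization graph is therefore acyclic and $S$ is conflict-equivalent to the execution.

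I would then close two loose ends. Aborted transactions need not be considered, since (per the abort handling described above) their write-sets are never applied and are invisible to other transactions, so they contribute no conflict edges. And the byzantine setting does not weaken the argument: by Definition~\ref{def:conflicts} the conflict-detection rules are re-checked by every honest replica before a batch is written to the SMR log, and the BFT layer forces agreement on batch contents and ordering, so the hypotheses of Lemmas~\ref{lem:equal} and~\ref{lem:order} hold for the execution observed by the honest replicas even with up to $f$ malicious nodes per cluster; the $f+1$-signature proofs carried by 2PC prepare/commit records extend the same guarantee across clusters.

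The step I expect to be the main obstacle is the reduction itself: verifying that an order consistent with every pairwise conflict is genuinely conflict-equivalent to the execution (i.e., correctly appealing to the serialization-graph theorem rather than settling for mere pairwise consistency), and pinning down $TCP$ as a well-defined value that honest replicas agree on --- in particular confirming that for a local transaction the "batch written to the SMR log" time and for a distributed transaction the "all positive votes received" time are the quantities actually used in Lemmas~\ref{lem:equal} and~\ref{lem:order}. Once those are granted, the remainder is a one-line contradiction argument.
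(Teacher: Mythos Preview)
Your proposal is correct and follows essentially the same route as the paper: both arguments invoke the serialization-graph acyclicity criterion, use Lemma~\ref{lem:equal} and Lemma~\ref{lem:order} to show every conflict edge respects the $TCP$ order, and then derive a contradiction $TCP(t_1) < \cdots < TCP(t_1)$ from a hypothetical cycle. Your write-up is in fact more explicit than the paper's --- you construct the equivalent serial order directly and address tie-breaking, aborts, and the byzantine layer --- but the core reasoning is identical.
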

\begin{proof}
We utilize the serializability graph (SG) test. The SG is defined as a graph where each transaction is represented as a vertex and each conflict is represented as a directed edge between two transaction vertices. The SG test is the following: if the SG contains no cycles, then the transaction history is serializable~\cite{bernstein1987concurrency}. We now demonstrate that TransEdge ensures that the SG does not contain cycles for any transaction history.

Assume to the contrary that there is a cycle in the SG, $t_1 \rightarrow \ldots \rightarrow t_n \rightarrow t_1$. Using Lemmas~\ref{lem:equal} and \ref{lem:order}, we derive that $TCP(t_1) < TCP(t_2)$. By transitivity, we arrive at $TCP(t_1) < TCP(t_1)$, which is a contradiction. This proves the serializability of any transaction history generated by TransEdge. 
\end{proof}

\section{Read-only Transactions}
\label{sec:rot}

TransEdge \cut{ is the first database to} supports efficient serializable \emph{read-only transactions} in a byzantine environment.  The advantages
of an efficient snapshot read-only transaction are: (1) commit-freedom: the
transaction needs to coordinate with only a single node from each
accessed partition without involving the other replicas, and (2)~non-interference: the transaction does not interfere with on-going
read-write transactions.  The first feature reduces the cost of
read-only transactions significantly since the alternative is to
commit read-only transactions as regular read-write transactions,
which incurs a coordination cost across at least $2f+1$ nodes per accessed
partition and running 2PC and BFT protocols that incur significant overhead (Section~\ref{sub:dist-rw}.) 
The
second feature (non-interference) means that read-write transactions
can proceed even in the presence of conflicting read-only
transactions. 
\subsection{Overview}
\label{sub:rot-overview}

%
In TransEdge, we extend the use of Merkle Trees~\cite{merkle1980protocols} for database applications~\cite{jain2013trustworthy}
in two ways that are not supported in earlier systems:
\begin{enumerate}
\item Support distributed read-only transactions across multiple
untrusted nodes or clusters of nodes.
\item Support updating the Merkle Tree without relying on the
involvement of a centralized trusted entity.
\end{enumerate}

When a leader commits a
new batch of transactions, it recomputes the Merkle Tree to reflect
the new data. To ensure that the new recomputed Merkle Tree is
authentic, the leader commits the root of the new Merkle Tree with
the corresponding batch. Replicas verify the authenticity of the new
root and provide a signed message of the new root. When a client
sends a local read-only transaction request, the leader responds with
the corresponding data blocks, the sibling nodes in the path to root
for all blocks, and the root of the tree with $f+1$ signatures.

Distributed read-only transactions are performed by querying all accessed partitions. However, this is not sufficient because the
responses of the different leaders might be inconsistent---unlike the
local read-only transaction case, where we know the state is
consistent because they all belong to one Merkle Tree which
represents a consistent snapshot.  We propose a multi-round
distributed read-only transaction protocol that ensures reading from
a consistent snapshot across partitions. The protocol relies on
tracking dependencies across partitions, detecting inconsistencies in the first round, and satisfying any missing dependencies---if any---in the following round.



\subsection{Local Read-Only Transactions}
\label{sub:rot-local}
\cut{  
Each batch in the SMR log contains a Merkle tree root
(Figure~\ref{fig:state-2pc}.) The Merkle tree root corresponds to the
Merkle tree of all the local and committed transactions up to that
batch (\emph{i.e.}, prepared transactions are not going to affect the
state of the Merkle tree.) This Merkle tree root is used
for both local and distributed read-only transactions (distributed read-only transactions also use the rest
of the read-only segment of batches including the CD vector and the
LCE number.)}

The
client sends a \textsf{read-only-txn} request to the leader. The leader responds with the
requested data objects, sibling nodes in the path to the root for all
data objects, the Merkle Tree root, and $f+1$ signed messages proving
the authenticity of the root.
When the client receives the response, it verifies the authenticity
of the response and Merkle Tree root. Then, it computes
the Merkle Tree root using the received data blocks and sibling
nodes. If the computed root matches the received one (which is signed by $f+1$ nodes), then the client
accepts the received data. 



\subsection{Distributed Read-Only Transactions}
\label{sub:rot-dist}

Distributed read-only transactions are ones that read from multiple
partitions. TransEdge's support for distributed read-only transactions
builds on the support for local read-only transactions. The Merkle
Trees are utilized to generate proofs to authenticate data. However, we need to augment these
Merkle Trees with a dependency tracking mechanism to ensure that the
results are consistent \emph{across} partitions (see the example in
Figure~\ref{fig:rot-motivate}).

\subsubsection{\textbf{Overview and Intuition}}
To overcome inconsistencies between reads from different partitions,
we augment TransEdge with a \emph{dependency tracking} mechanism. This
dependency tracking mechanism enables detecting inconsistencies
of reads across partitions. Also, it enables identifying the needed
dependencies which allow TransEdge to ask partitions for missing
dependencies in following rounds. The read-only
algorithms utilize the information in the read-only segment of
batches. 

The dependency tracking mechanism relies on encoding the dependency
from one partition to all other partitions. Specifically, when a new
batch is committed, it includes all the dependencies from that
batch to other partitions. For example, in the previous motivating
scenario (Figure~\ref{fig:rot-motivate}), batch 2 in $X$, $b_2^X$,
would include the dependency information that corresponds to $t_1$. In
this case, $t_1$ is a distributed transaction that commits in both
$X$ ($b_2^X$) and $Y$ ($b_2^Y$). Therefore, $b_2^X$ has a dependency
relation to transactions in $b_2^Y$.

This dependency information allows detecting inconsistent reads.
For example, in the previous scenario the read-only transaction reads
the state of batch 4 at $X$, $b_4^X$, and batch 2 at $Y$, $b_2^Y$
(Figure~\ref{fig:rot-motivate}). In this case, a transaction committed in $b_4^X$ 
has a dependency to a transaction committed in $b_4^Y$.  When a client sees this dependency, it detect
an inconsistency since it read the state $b_2^Y$, which is earlier
than the required dependency, $b_4^Y$ 
(the dependency is satisfied if
the read batch is equal to or higher than the dependency.) 
In this
case, the client sends a request to partition $Y$ to get the state
that it depends on, which is $b_4^Y$ in this case.
 

\label{def:challenges}
\subsubsection{\textbf{Dependency tracking challenges 
}}
\label{subsub:dependency}
\begin{enumerate}[leftmargin=*]
\item Dependency representation granularity:
tracking
dependencies at the level of
transactions for all partition incurs a high overhead.
Our protocol tracks and detects dependencies at the granularity of
partitions instead of transactions, reducing the space of
dependencies to be in the order of the number of partitions rather
than the number of transactions. Also, we propose a
method to represent dependencies in a coarse way that allows
representing the dependencies of a large number of partitions with a
single number.
\item Allowing unconstrained local transaction commitment:
A feature that we want to maintain in TransEdge is the ability to
commit batches with local transactions without waiting until the
prepared transactions of the previous batch are ready. Therefore, if
transactions prepare in batch $b_i$, we cannot predict at which
future batch they will commit. This makes encoding dependencies
difficult, since we do not know which batch we are going to depend on
before others commit. To overcome this challenge, we encode
dependencies according to the batch where they were prepared, and
utilize the LCE number---which encodes the number of the batch when the
committed transactions were prepared---to check dependencies.

\end{enumerate}

\subsubsection{\textbf{Recording Dependency Information During
Read-Write Transaction Processing}}
\label{subsub:dependency}
In this section, we present how read-write transaction processing is
augmented to record dependencies that are used by read-only transactions. These changes are
essential to support the algorithms for read-only transactions that
we present in the following section.

\textbf{(a) Ordering Constraint on Distributed Read-Write Transactions.}
To enable efficient tracking of dependencies, we add a constraint on
the order of committing distributed read-write transactions at each
partition. Specifically, we group distributed read-write transactions
according to which batch contains their \textsf{prepare} records and
call them a \emph{prepare group}. We force transactions in a prepare
group to commit
together in a future batch (all transactions in the prepare group
commit together in one batch.) The ordering constraint is the
following:
\begin{definition}
\label{def:ordering_constraint}
The TransEdge \textbf{ordering constraint} forces prepare groups to
commit or abort in order; transactions in a prepare group in batch $i$ commit
before transactions in another prepare group commit or abort in batch $j$ if and only
if $i<j$.
\end{definition}
For example, consider two prepare groups, one denoted $P_1$ that is
part of batch $b_1^X$ and another denoted $P_2$ that is part of
$b_2^X$. Assume transactions in $P_1$ commit in batch $b_i^X$ and
transactions in $P_2$ commit in batch $b_j^X$. The ordering constraint
enforces that $i$ is less than $j$.

This ordering constraint enforces an order of when transactions can be committed. However, it still allows for concurrent processing in various ways. First, local transactions that do not conflict with in-progress batches can still be committed while in-progress batches of 2PC transactions are being processed and waiting to be committed. Second, distributed 2PC transactions can always concurrently be processed as long as they do not conflict with in-progress batches (and previously committed transactions). This is because the initial processing of transactions is performed by the client before the commit request time. After the transaction is ready, the commit request is sent, and the transaction enters the prepare phase. The processing of the prepare phase is not constrained by other batches. Third, distributed 2PC transactions in partitions that do not conflict with others can be performed concurrently. This is because two partitions with no conflicting transactions (direct or transitive) do not need to communicate with each other, and there is no ordering constraints that are enforced for them.

Although the ordering constraint enforces an order of commitment across batches, this is performed for transactions that already finished processing and requested to commit. Therefore, a long-running transaction is not going to increase the wait time due to the ordering constraint because it is only exposed to TransEdge after it is done processing and requested to commit. However, long-running transactions are not ideal in Transedge, since they may accumulate more conflicts before the request to commit phase and thus leading to a high chance of aborting due to conflicts (this is typical of occ-based solutions in general). As for long-running read-only transactions (such as scans and large analytics queries), these will not lead to performance degradation due to the TransEdge read-only protocol that makes them not conflict with ongoing read-write transactions.


\cut{
\begin{figure}[t]
\centering
\includegraphics[scale=0.77]{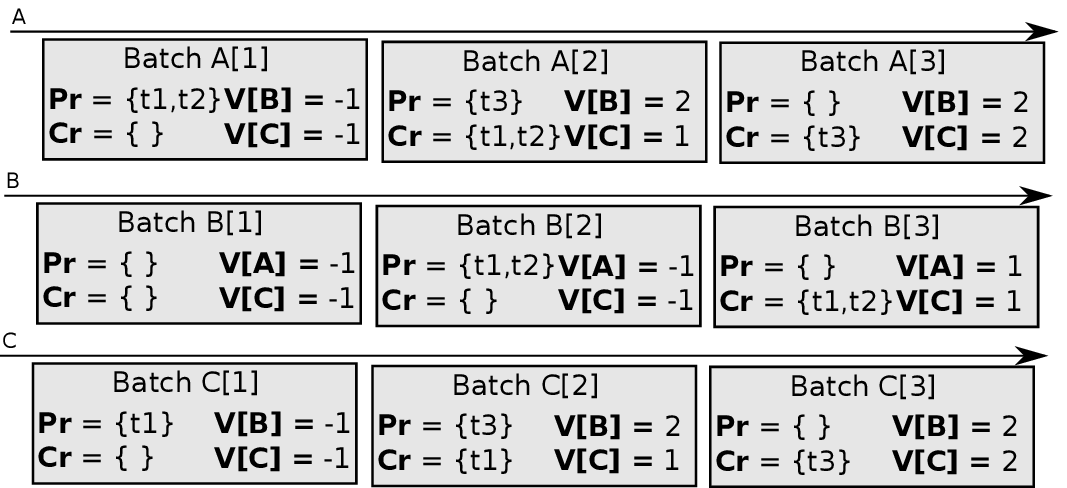}
\caption{An example of dependency vectors in a scenario with three
edge nodes and three transactions.}
\label{fig:dep-vectors}
\end{figure}
}

\textbf{(b) Dependency Tracking with CD Vectors.}
Given the ordering constraint, we are now able to represent
dependencies across partitions efficiently---all dependencies from
one batch to another partition in TransEdge are represented by one
number.  Specifically, each partition maintains a vector of
dependencies for every batch called the \emph{Conflict Dependency
(CD) vector}\cut{\footnote{The CD vector is an adaptation of vector
clocks to represent the dependencies from one partition to others.
However, unlike typical usage of vector clocks, each node's logical
clock does not correspond to communication events. Rather, it
corresponds to a batch.}}. We use the notation $\mathcal{V}_i^X$ to
denote the CD vector in batch $b_i^X$. The entry $\mathcal{V}_i^X[Y]$
denotes the dependency from batch $b_i^X$ to partition $Y$.
Specifically, the entry is the batch number at $Y$ that $X_i$ depends
on. 

The numbers in the CD vector represent the dependency to the batch
that contains the prepare records rather than the commit record. This
is because we want to allow unconstrained local transaction
commitment (Challenge 2 in subsection~\ref{def:challenges}.)
To allow unconstrained local transaction commitment, we cannot
predict (or enforce) at which batch a transaction would
commit at other partitions. However, we know at which batch the
transaction prepared. In other words, tracking using the batch where
transactions prepared complicates the design, but it enables us to
make local transactions commit with arbitrary frequency without delay. 

For example, consider the scenario in Figure~\ref{fig:2pc-scenario}.
The distributed transactions ($t_3$, $t_4$, and $t_5$) are prepared
in batches $b_0^X$ and $b_5^Y$, and committed in batches $b_2^X$ and
$b_8^Y$. (assume for ease of exposition that these are the only distributed transactions in
the scenario.) The CD vectors in the figure represent the dependency
to $X$ followed by the dependency to $Y$, \emph{i.e.}, $[i,j]$ represent
a dependency to $b_i^X$ and $b_j^Y$. The dependency from a batch to
the node writing the batch is always the batch id, \emph{i.e.}, the
CD vector $V_i^X$ has the value $i$ as the dependency to $X$.
Intuitively, this is because the state of the batch depends on all
the local and committed distributed transactions up to that batch. 

The dependency number to other partitions represents the batch in which
any common distributed transactions have prepared. For example,
observe the CD vector $V_2^X$ in batch $b_2^X$ of
Figure~\ref{fig:2pc-scenario}. In that batch, the distributed
transactions $t_3$, $t_4$, and $t_5$ committed. These transactions
were prepared in $Y$ at batch $b_5^Y$. This makes the dependency
number from batch $b_2^X$ to $Y$ be 5. Likewise, the dependency
number from batch $b_8^Y$ (where the distributed transactions commit
in $Y$) to $X$ is 0 (where the distributed transactions prepared in
$X$.)

Note in Figure~\ref{fig:2pc-scenario} the dependency values are initially -1 to represent the
absence of dependencies. The dependency relationship to
other partitions are only affected by the distributed transactions
that commit in the batch (not the ones that prepare.) It is possible
that the batch depends on multiple batches in another partition. In
this case, the dependency is to the latest batch of the multiple
dependencies.

\textbf{(c) Reporting Dependencies in Prepared Messages.}
For the batch processing thread to be able to derive dependencies for
the CD vector, it
needs to have the dependency information related to all transactions
in the committed segment. Specifically, what is needed are the batch
numbers
where the committed transactions prepared on other clusters. For
example, in Figure~\ref{fig:2pc-scenario}, while constructing batch
$b_2^X$, the leader of $X$ needs to know that the distributed
transactions $t_3$, $t_4$, and $t_5$ were prepared in $Y$ at batch
$b_5^Y$. The number of the batch will enable knowing the dependency
from $X$ to $Y$ in relation to the committed transactions.
Additionally, we need to know any transitive dependencies from the
batch that we depend on. For example, in
Figure~\ref{fig:2pc-scenario}, since the distributed transactions
lead to a dependency from $b_2^X$ to $b_5^Y$, the consequence is that
$b_2^X$ (transitively) depends on anything that $b_5^y$ depends on.

To summarize, the batch processing thread needs the direct and
transitive dependencies of all transactions in the committed segment.
To collect this dependency information, each \textsf{prepared}
message of a distributed transaction $t_i$ is piggybacked with the CD
vector of the batch $b_j^Y$ where $t_i$ is prepared. This piggybacked
CD vector encodes both the direct and transitive dependencies.

\alglanguage{pseudocode}
\begin{algorithm2e}[t]
\begin{small}
\caption{ Algorithm to derive dependencies to be
part of a new batch at partition $X$. }
\label{alg:derive}

\begin{algorithmic}[1]
\State $V^X$ := set of all dependency vectors at $X$
\State on event DeriveDepVector (in: i) \{ // i is the batch
number
\State \,\,\,\,\,$V_i^X \leftarrow V_{i-1}^X$
\State \,\,\,\,\,\textbf{for} commit record $cr$ in $b_i^X$.committed
\textbf{do}
\State \,\,\,\,\,\,\,\,\,\textbf{for} reported CD vector $V_j^Y$ in $cr$
\textbf{do}
\State \,\,\,\,\,\,\,\,\,\,\,\,\,\,$V_i^X =$ pairwise\_max($V_i^X$,
$V_j^Y$)
\State \}
\end{algorithmic}
\end{small}
\end{algorithm2e}

\textbf{(d) Deriving The CD Vector in Batches.}
Dependencies need to be derived while wrapping up the construction of
the in-progress batch (Section~\ref{sub:batch}). At a high-level, the
batch processing thread needs to go through all the transactions in
the \textsf{committed} segment in the batch and derive the
dependencies to other partitions according to the reported dependency
vectors in \textsf{prepared} messages.

The algorithm (Algorithm~\ref{alg:derive}) to derive the CD vector
$V_i^X$ starts by loading the CD vector of the previous batch,
$V_{i-1}^X$.  
Then, for each \textsf{commit} record of a transaction in the
committed segment ($cr$ in the algorithm), the leader processes all
the corresponding CD vectors. For example, consider a transaction $t$
that span three partitions, $X$, $Y$, and $Z$. When $X$ is deriving
the CD vector of the batch where $t$ commits, it uses the CD
vectors received in the \textsf{prepared} messages from $Y$ and $Z$.
For every reported CD vector $V_j^Y$, the algorithm performs a
pairwise maximum operation with the current $V_i^X$.  Eventually, the
new CD vector $V_i^X$ will be equal to the pairwise maximum of the CD
vector of the previous batch and all the reported CD vectors of
transactions in the committed segment. This new value of the CD
vector represents all the direct and transitive dependencies resulting
from committing the transactions in the committed segment.

\alglanguage{pseudocode}
\begin{algorithm2e}[t]
\begin{small}
\caption{Algorithm to verify dependencies in a
distributed read-only transaction}
\label{alg:verify}

\begin{algorithmic}[1]
\State on event VerifyDependencies (in: V) \{
\State // V is the set of received dependency vectors from
accessed partitions
\State \,\,\,\,\,\textbf{for} partition $i$ in accessed partitions
\textbf{do}
\State \,\,\,\,\,\,\,\,\,\textbf{for} partition $j$ in accessed
partitions \textbf{do}
\State \,\,\,\,\,\,\,\,\,\,\,\,\,\,\textbf{if} $i==j$ \textbf{then}
\textsf{skip}
\State \,\,\,\,\,\,\,\,\,\,\,\,\,\,\textbf{if} $V_{b_i}^{X_i}[X_j] >
V_{b_j}^{X_j}.LCE $ \textbf{then}
\State
\,\,\,\,\,\,\,\,\,\,\,\,\,\,\,\,\,\,\,Unsatisfied\_dependencies
$\leftarrow$ $<X_j, V_{b_i}^{X_i}[X_j]>$
\State \,\,\,\,\,\textbf{if} Unsatisfied\_dependencies is not empty
\textbf{then}
\State
\,\,\,\,\,\,\,\,\,\,\textsf{rot\_second\_round(}unsatisfied\_dependencies)
\State \}
\end{algorithmic}
\end{small}
\end{algorithm2e}

\subsubsection{\textbf{Read-Only Transaction Protocol}}
\cut{
In this section, we present the protocol to process read-only
transactions.

\textbf{Client Protocol.}}

The read-only transaction protocol discussed in this section is designed to guarantee serializability. The two rounds needed to perform a read-only transaction (in the worst case) ensure that the objects retrieved in the two rounds are always serializable. When a client issues a distributed read-only transaction, it sends a
request to the leader of each accessed partition. The leader of each
partition responds with the current values, and the most recent Merkle
Tree information. The leader also sends the CD
vector that corresponds to the returned Merkle Tree root. The client
uses the returned CD vectors to decide whether the returned values
are consistent across partitions as we show next.

\textbf{Verifying Dependencies.}
The algorithm to verify dependencies (Algorithm~\ref{alg:verify})
processes dependency vectors one by one. The client receives $n$
dependency vectors ($V$ in the algorithm) from $n$ partition leaders,
where $n$ is the number of accessed partitions in the read-only
transaction. (We use the notation $\mathcal{V}_{b_i}^{X_i}$ to
denote the dependency vector of the $i^{th}$ partition accessed by
the read-only transaction.)
For each dependency vector, the algorithm verifies dependencies to
other accessed partitions. Specifically, when processing
$\mathcal{V}_{b_i}^{X_i}$, the algorithm checks the dependencies to
all other $n-1$ partitions---hence, it checks
$\mathcal{V}_{b_i}^{X_i}[X_j]$, for all $0 \le j < n$ and $j \ne i$.

Each value $\mathcal{V}_{b_i}^{X_i}[X_j]$ is compared with the
\emph{Last Committed Epoch} (LCE) received from $X_j$. The LCE is the
batch number that corresponds to the batch where the committed
records in the received batch have prepared.  
If the LCE value is
greater than or equal to $\mathcal{V}_{b_i}^{X_i}[X_j]$, then the
dependency is satisfied, and the algorithm proceeds to check the next
dependency.  If not, then the dependency is flagged as unsatisfied
and becomes part of the second-round of the read-only transaction
algorithm. 

\textbf{Termination and the Second Round.}
After all the dependency vectors are checked, the algorithm
terminates if all dependencies are satisfied. Otherwise, it starts
the second round of the read-only transaction algorithm. In the
second round, the algorithm asks explicitly for the missing
dependencies. For example, if the dependency
$\mathcal{V}_{b_i}^{X_i}[X_j]$ was not satisfied in the first round,
the algorithm sends a request to the leader of partition $X_j$ asking
for batch number $\mathcal{V}_{b_i}^{X_i}[X_j]$. After all such
batches are served, the algorithm terminates.

\subsection{Properties of Read-Only Transactions}
\label{sub:protocol_properties}

In this section, we discuss the correctness and data freshness guarantees of TransEdge.

\textbf{Read-only Transaction Correctness.}
TransEdge guarantees serializability~\cite{bernstein1987concurrency}. We extend the proof presented in Section~\ref{sub:rw-correctness}. 
%
The intuition behind the proof is to show that read-only transactions do not introduce cycles in the serializability graph (SG) of any transaction execution history (history for short). 


\begin{lemma}
\label{lem:cdvector}
For any two read-write transactions with a conflict $t_i \rightarrow t_j$, the CD vector of the batch that includes $t_j$, $b^Y_j$, includes the dependency to the batch that includes $t_i$, $b^X_i$ as well as the dependencies of $b^X_i$.
\end{lemma}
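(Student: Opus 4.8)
The plan is to trace the dependency on $b^X_i$ forward --- through the 2PC prepared messages and the CD-vector derivation of Algorithm~\ref{alg:derive} --- until it lands inside $\mathcal{V}^Y_j$, the CD vector attached to $b^Y_j$. First I would invoke Lemma~\ref{lem:order} to get $TCP(t_i) < TCP(t_j)$, and observe that a data conflict $t_i \rightarrow t_j$ must be on some object $o$ that lives in one partition $P$ accessed by both transactions. The key structural step is: at $P$, $t_i$ is committed strictly before $t_j$ is first placed in any batch. For wr and ww conflicts this is immediate (the version $t_j$ reads or overwrites becomes visible only after $t_i$ commits, and the conflict rules forbid $t_i$ committing while $t_j$ is prepared there); for rw it is case~(3) of the proof of Lemma~\ref{lem:order} together with the conflict detection rules (Definition~\ref{def:conflicts}), which refuse to prepare $t_j$ while the conflicting $t_i$ is prepared-but-uncommitted at $P$. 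Let $b^P_c$ be the batch committing $t_i$ at $P$ and $b^P_r$ the first batch containing $t_j$ at $P$ ($r$ is $t_j$'s prepare batch if $t_j$ is distributed, its commit batch if $t_j$ is local to $P$); then $c < r$.

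Next I would record two facts that come straight out of Algorithm~\ref{alg:derive}: (i) CD vectors are componentwise monotone within a partition, $\mathcal{V}^W_{m-1} \le \mathcal{V}^W_m$, since the derivation copies the previous batch's vector and then only applies pairwise maxima (and $\mathcal{V}^W_m[W]=m$ by construction); and (ii) when a prepare group commits in a batch, its CD vector is pairwise-maxed with the CD vectors piggybacked on those transactions' prepared messages from \emph{every} partition they touched. Applying (ii) at $P$ when $t_i$'s prepare group commits in $b^P_c$ gives $\mathcal{V}^P_c \ge \mathcal{V}^X_q$, where $b^X_q$ is the batch where $t_i$ prepared at $X$ (when $t_i$ is local, $X=P$ and $b^X_q = b^X_i$); applying (i) with $c<r$ gives $\mathcal{V}^P_r \ge \mathcal{V}^P_c$; and when $t_j$ commits in $b^Y_j$, applying (ii) again at $Y$ --- or (i) again, if $Y=P$ --- maxes in the CD vector $t_j$ reported from $P$, which is $\mathcal{V}^P_r$. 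Chaining, $\mathcal{V}^Y_j \ge \mathcal{V}^P_r \ge \mathcal{V}^P_c \ge \mathcal{V}^X_q$, so $\mathcal{V}^Y_j[X] \ge q$ and $\mathcal{V}^Y_j$ dominates all of $b^X_q$'s direct and transitive dependencies. The sub-cases where $t_i$ or $t_j$ is local collapse the chain to pure intra-partition monotonicity (a local transaction touches one partition, so $P$ then coincides with $X$ and/or $Y$), giving the literal $\mathcal{V}^Y_j \ge \mathcal{V}^X_i$ with $\mathcal{V}^Y_j[X] \ge i$.

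The step I expect to be the real obstacle is reconciling what a CD-vector entry literally stores with what the statement names: by design (Challenge~2 of subsection~\ref{def:challenges}), entries point at \emph{prepare} batches, not commit batches, so when $t_i$ is distributed and $X \ne P$ the chain above delivers $\mathcal{V}^Y_j[X] \ge q$ with $q \le i$ rather than literally ``$\ge i$''. Turning this into ``$\mathcal{V}^Y_j$ includes the dependency to $b^X_i$'' requires the ordering constraint (Definition~\ref{def:ordering_constraint}): because prepare groups commit strictly in prepare-batch order, the LCE of a batch at $X$ reaches $q$ only once $t_i$'s whole prepare group --- hence $t_i$ itself --- has committed there, i.e.\ only at $b^X_i$ or later. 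Thus the entry $\mathcal{V}^Y_j[X] \ge q$ is exactly the test ``has $t_i$ committed at $X$?'', and $\mathcal{V}^X_q$ already carries precisely the transitive dependencies any state reflecting $b^X_i$ must carry. I would isolate this prepare-to-commit translation as a short auxiliary observation up front; once it is in place, the rest of the argument is bookkeeping over pairwise maxima --- and it is this same translation that will make the lemma usable in the read-only serializability argument that follows.
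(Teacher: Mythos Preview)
Your proposal is correct and considerably more careful than the paper's own proof, which is a single sentence: ``This follows from the design where the CD vector is updated after preparing with the dependencies of all the shards that are involved in the transaction.'' The paper simply appeals to the protocol description and Algorithm~\ref{alg:derive} without tracing anything explicitly.

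What you do differently is actually carry out the trace: you invoke Lemma~\ref{lem:order} to order the two transactions, localize the conflict to a common partition $P$, argue via the conflict detection rules that $t_i$ must commit at $P$ before $t_j$ can even be placed in a batch there, and then chain the monotonicity of CD vectors through the pairwise-max steps of Algorithm~\ref{alg:derive}. More importantly, you surface and resolve a genuine subtlety the paper glosses over entirely --- that CD-vector entries record \emph{prepare}-batch numbers, not commit-batch numbers, so the literal inequality $\mathcal{V}^Y_j[X] \ge i$ need not hold when $t_i$ is distributed, and the statement must be read through the LCE/ordering-constraint translation. Your auxiliary observation making this explicit is exactly what the downstream read-only serializability proof needs, and the paper's one-line proof leaves the reader to reconstruct it. The trade-off is length: the paper's version is a design appeal that a reader who has internalized Sections~\ref{sub:batch} and~\ref{subsub:dependency} can accept on trust, while yours is a self-contained argument that would survive scrutiny independent of those sections.
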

\begin{proof}
This follows from the design where the CD vector is updated after preparing with the dependencies of all the shards that are involved in the transaction.
\end{proof}

\begin{lemma}
\label{lem:cdvector2}
For a sequence of read-write transaction conflicts $t_i \rightarrow \ldots \rightarrow t_j$, the CD vector of the batch that includes $t_j$, $b^Y_j$, includes the dependency to the batch that includes $t_i$, $b^X_i$.
\end{lemma}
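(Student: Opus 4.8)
The plan is to prove this by induction on the length of the conflict chain $t_i \rightarrow \ldots \rightarrow t_j$, using Lemma~\ref{lem:cdvector} as the single-step building block and the fact (established in Section~\ref{subsub:dependency}) that the CD vector of a batch is computed as the pairwise maximum of the previous batch's CD vector and the reported CD vectors of all committed transactions (Algorithm~\ref{alg:derive}). The key structural observation I would lean on is that "the dependency encoded in a later batch is monotone along the SMR log and composes transitively": if batch $b^Y_j$'s CD vector dominates (pointwise $\ge$) the CD vector of the batch $b^X_i$ containing $t_i$, then it also dominates everything $b^X_i$ itself dominates, because the pairwise-max operation propagates the full transitive closure rather than just the immediate predecessor.

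First I would set up the base case: a chain of length one, $t_i \rightarrow t_j$, which is exactly Lemma~\ref{lem:cdvector} — the CD vector of $b^Y_j$ includes the dependency to $b^X_i$ (i.e.\ $\mathcal{V}^Y_j[X] \ge i$) and also includes the dependencies of $b^X_i$ (i.e.\ $\mathcal{V}^Y_j$ pointwise dominates $\mathcal{V}^X_i$). Second, for the inductive step, consider a chain $t_i \rightarrow \ldots \rightarrow t_k \rightarrow t_j$. By the induction hypothesis applied to the prefix $t_i \rightarrow \ldots \rightarrow t_k$, the CD vector of the batch $b^Z_k$ containing $t_k$ dominates the CD vector of $b^X_i$ (and in particular records the dependency to $b^X_i$). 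By Lemma~\ref{lem:cdvector} applied to the final edge $t_k \rightarrow t_j$, the CD vector of $b^Y_j$ dominates the CD vector of $b^Z_k$. Composing these two domination relations (transitivity of pointwise $\ge$ on vectors) yields that $\mathcal{V}^Y_j$ dominates $\mathcal{V}^X_i$, hence in particular $\mathcal{V}^Y_j[X] \ge i$, which is the claimed dependency to $b^X_i$.

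One subtlety I would need to handle carefully is the case where the intermediate transaction $t_k$ and $t_j$ reside in the same partition (so the "dependency to $b^X_i$" is not via an inter-partition CD entry but via the log order within one partition): here I would invoke the monotonicity of CD vectors along a single SMR log — since $V_i^X \leftarrow V_{i-1}^X$ then takes a pairwise max, later batches in the same partition have pointwise-larger CD vectors — together with the fact that the self-entry $\mathcal{V}^X_i[X] = i$ is recorded and carried forward. Another detail is that a conflict edge $t_k \rightarrow t_j$ by Lemma~\ref{lem:order} forces $TCP(t_k) < TCP(t_j)$, so $b^Z_k$ is genuinely committed before $b^Y_j$ and its CD vector (or a reported piggybacked copy of it) is available when $b^Y_j$'s CD vector is derived; I would note this to justify that Lemma~\ref{lem:cdvector} actually applies. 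The main obstacle is thus not the induction itself — which is routine — but stating precisely what "includes the dependencies of $b^X_i$" means as a vector-domination invariant and checking that Algorithm~\ref{alg:derive}'s pairwise-max construction, combined with intra-log monotonicity, preserves that invariant across every type of conflict edge (intra-partition vs.\ inter-partition, local vs.\ distributed transactions).
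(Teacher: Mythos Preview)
Your proposal is correct and follows the same approach as the paper, which simply states that the result holds ``by applying Lemma~\ref{lem:cdvector} transitively.'' Your induction on chain length, with the vector-domination invariant and the handling of intra-partition monotonicity, is a more careful unpacking of exactly that transitive application.
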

\begin{proof}
This is the case by applying Lemma~\ref{lem:cdvector} transitively. 
\end{proof}

\begin{lemma}
Given a serializability graph of a transaction history of TransEdge read-write transactions, adding the node and edges related to a read-only transaction $t_r$ would not introduce any cycles.
\end{lemma}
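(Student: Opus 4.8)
The plan is to show that inserting the read-only transaction $t_r$ can only add incoming write-read edges and outgoing read-write edges to the serializability graph, and then to rule out every directed path that would complete a cycle through $t_r$, using the CD-vector and LCE machinery. Concretely, since $t_r$ performs no writes it can only be the target of a write-read conflict $t_i \rightarrow_{wr} t_r$ (when $t_r$ reads a version written by a read-write transaction $t_i$) and the source of a read-write conflict $t_r \rightarrow_{rw} t_j$ (when a read-write transaction $t_j$ overwrites a version that $t_r$ read); it has no write-write edges, no outgoing write-read edges, and no incoming read-write edges. Because the serializability graph over read-write transactions alone is acyclic (the read-write serializability theorem of Section~\ref{sub:rw-correctness}) and $t_r$ appears at most once on a simple cycle, any cycle introduced by $t_r$ must have the shape $t_r \rightarrow_{rw} t_j \rightarrow \cdots \rightarrow t_i \rightarrow_{wr} t_r$ whose middle portion is a directed path entirely inside the read-write graph. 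So it suffices to prove that no such path exists.

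The next step is to read off snapshot positions from the outcome of the read-only protocol. For each partition $X$ accessed by $t_r$, let $k_X$ be the batch at $X$ that supplies $t_r$'s \emph{final} answer (after the first round, or after the second round of Algorithm~\ref{alg:verify} for partitions whose required dependency was revised upward). Then: if $t_i \rightarrow_{wr} t_r$ is witnessed by an object $t_r$ read from partition $X_i$ and $t_i$ commits at $X_i$ in batch $b^{X_i}_{a}$, we must have $a \le k_{X_i}$, because the version $t_r$ read is the one current as of batch $k_{X_i}$; symmetrically, if $t_r \rightarrow_{rw} t_j$ is witnessed at partition $X_j$ and $t_j$ commits at $X_j$ in batch $b^{X_j}_{c}$, then $c > k_{X_j}$, because $t_j$'s write overwrites the version current as of $k_{X_j}$. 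Moreover, upon termination the verification algorithm guarantees $\mathcal{V}_{k_{X_i}}^{X_i}[X_j] \le \mathsf{LCE}(b^{X_j}_{k_{X_j}})$ for every ordered pair of accessed partitions.

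The contradiction then comes as follows. Assume the path $t_j \rightarrow \cdots \rightarrow t_i$ exists (possibly of length zero, i.e.\ $t_i = t_j$, which is then a distributed transaction spanning both $X_i$ and $X_j$). By Lemma~\ref{lem:cdvector2} the CD vector of the batch containing $t_i$ at $X_i$ carries the dependency to the batch containing $t_j$ at $X_j$, and by the prepare-batch convention for CD vectors this yields $\mathcal{V}_{a}^{X_i}[X_j] \ge p_j$, where $p_j$ is the batch at $X_j$ in which $t_j$'s prepare group prepared. Since CD vectors are monotone non-decreasing along the batches of a single partition (Algorithm~\ref{alg:derive} applies only pairwise maxima) and $a \le k_{X_i}$, we get $\mathcal{V}_{k_{X_i}}^{X_i}[X_j] \ge p_j$ and hence $p_j \le \mathsf{LCE}(b^{X_j}_{k_{X_j}})$ by the verification guarantee. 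On the other hand, $t_j$'s prepare group prepared at $X_j$ in batch $p_j$ but committed there in batch $c > k_{X_j}$; by the ordering constraint (Definition~\ref{def:ordering_constraint}) prepare groups commit in prepare order, so $\mathsf{LCE}(b^{X_j}_{k_{X_j}})$, the prepare batch of the last prepare group committed by batch $k_{X_j}$, is strictly smaller than $p_j$. This contradicts $p_j \le \mathsf{LCE}(b^{X_j}_{k_{X_j}})$, so the path, and therefore the cycle, cannot exist.

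The hard part will be the bookkeeping that links the three distinct notions of time in play---the batch where a transaction \emph{prepares}, the batch where it \emph{commits}, and the LCE value a reader observes---and applying Lemma~\ref{lem:cdvector2} at the right batches and partitions when $t_i$ or $t_j$ is a local (single-partition) transaction or when the conflict path crosses partitions that $t_r$ did not access; in those cases one must trace exactly how a cross-partition dependency on $X_j$ ends up inside $X_i$'s CD vector. A secondary subtlety is the second round: one has to confirm that the $k_X$ values that ultimately define $t_r$'s read set are precisely those for which Algorithm~\ref{alg:verify} reports all dependencies satisfied, so that the verification inequality used above is legitimately available.
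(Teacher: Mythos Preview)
Your approach is correct and follows essentially the same strategy as the paper: reduce any putative cycle to the shape $t_r \rightarrow_{rw} t_j \rightarrow \cdots \rightarrow t_i \rightarrow_{wr} t_r$, then use Lemma~\ref{lem:cdvector2} to show the CD vector at $t_i$'s batch carries the dependency to $t_j$'s (prepare) batch, and derive a contradiction with the read-only algorithm's verification.

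The one organizational difference is how the two rounds are handled. You try to treat them uniformly by defining a single final batch $k_X$ per partition and asserting the verification inequality $\mathcal{V}_{k_{X_i}}^{X_i}[X_j] \le \mathsf{LCE}(b^{X_j}_{k_{X_j}})$ holds at termination. The paper instead does an explicit case split on whether the read of $t_i$'s value happened in the first or second round. Your uniform inequality is \emph{not} directly provided by Algorithm~\ref{alg:verify} when $k_{X_i}$ is a second-round batch (no re-verification is performed), so to close the argument you must fall back on exactly the paper's case-2 reasoning: the first-round batch $b^K_k$ that triggered the second-round request for $b^{X_i}_{k_{X_i}}$ already carried, via the pairwise-max in Algorithm~\ref{alg:derive}, all of $b^{X_i}_{k_{X_i}}$'s dependencies, including the one on $X_j$; hence the client also requested a sufficiently advanced batch at $X_j$ in the second round. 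You flagged this as a ``secondary subtlety,'' and it is precisely the missing piece; once filled in, your proof and the paper's coincide, with yours being somewhat more explicit about the prepare/commit/LCE bookkeeping.
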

\begin{proof}
We start with the serializability graph $SG^{rw}$ which contains the read-write transactions only of a TransEdge execution. Given our discussion in Section~\ref{sub:rw-correctness}, $SG^{rw}$ contains no cycles. When adding $t_r$ to $SG^{rw}$, we get the graph $SG^{rw}_{+t_r}$. In $SG^{rw}_{+t_r}$, a new node for $t_r$ is added as well as conflict relations to other transactions. The following are the edges that are added for conflicts between $t_r$ and other transactions in $SG^{rw}_{+t_r}$:
(1)~Write-read conflicts: these are conflicts that are added from a set of transactions $T_{wr}$ that wrote objects that were read by $t_r$. (2)~Read-write conflicts: these are conflicts that are added from $t_r$ to a set of transactions $T_{rw}$ that overwrote the values read by $t_r$. 

We need to show that introducing these edges would not introduce a cycle to $SG^{rw}_{+t_r}$. We prove this by contradiction. Assume to the contrary that introducing the edges above leads to a cycle. This cycle would look like the following $t_r \rightarrow_{rw} t_{after-r} \rightarrow \ldots \rightarrow t_{before-r} \rightarrow_{wr} t_r$. Note that such an introduced cycle must have this pattern where $t_r$ is part of the new cycle (since cycles did not exist before), and where the edge emanating from $t_r$ is a read-write edge and the edge coming into $t_r$ is a write-read edge (this is because all the operations in $t_r$ are read-only operations.) 

From Lemma~\ref{lem:cdvector2}, since there is a sequence of conflicts from $t_{after-r}$ to $t_{before-r}$ this means that the CD vector of the batch containing $t_{before-r}$, $b^X_i$, contains a dependency to the batch containing $t_{after-r}$, $b^Y_j$. There are two cases:\\
(1)~the read from batch $b^X_i$ (for $t_{before-r}$) was in the first round of the read-only transaction: in that case the client would have asked for batch $b^Y_j$ since it is a dependency and the transaction would have read from $t_{after-r}$. This is a contradiction since as part of our starting assumption the conflict is from $t_r$ to $t_{after-r}$. \\
(2)~the read from batch $b^X_i$ (for $t_{before-r}$) was in the second round of the read-only transaction: this means that there is a read in $t_r$ from a transaction $t_k$ that triggered the second round read from $t_{before-r}$ (i.e., the CD vector of the batch containing $t_k$, $b^K_k$, has a dependency to $b^X_i$.) However, since TransEdge Algorithms includes the dependency to the batch in the conflict (e.g., from $b^X_i$) as well as that batch's dependencies (e.g., $b^X_i$'s dependency to $b^Y_j$), this means that $b^X_i$ must have been part of the dependencies in the CD vector of $b^K_k$. In this case, the second round of read-only transactions should have also asked for $b^K_k$. This is a contradiction since $t_k$ that is part of $b^K_k$ is not part of the read response since in our starting assumption the conflict is from $t_r$ to $t_{after-r}$.\\
Therefore, we show that both cases lead to a contradiction proving that no cycles are introduced.
\end{proof}

\begin{theorem}
TransEdge with both read-write and read-only transactions guarantee serializability.
\end{theorem}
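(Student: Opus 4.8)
The plan is to reuse the serializability-graph (SG) test. I already have two ingredients: the read-write serializability theorem, whose proof shows that $SG^{rw}$ (the SG restricted to read-write transactions) is acyclic; and the preceding lemma, which shows that adding a single read-only transaction to $SG^{rw}$ introduces no cycle. The goal is to show that the SG of an arbitrary history $H$ containing \emph{both} read-write and read-only transactions is acyclic, whence $H$ is serializable.

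The first step is the structural observation that no two read-only transactions conflict: a wr, rw, or ww conflict requires at least one of the two transactions to issue a write, and read-only transactions issue none. Consequently, in the SG of $H$ the read-only vertices form an edgeless induced subgraph, and every edge incident to a read-only transaction $t_r$ has a read-write transaction at its other end --- a wr edge $w \rightarrow_{wr} t_r$ coming in (for some $w$ whose write $t_r$ read) and an rw edge $t_r \rightarrow_{rw} v$ going out (for some $v$ that overwrote a version $t_r$ read). The second step is to build a potential on the vertices: keep $TCP$ on the read-write transactions, and assign each read-only $t_r$ a value $\tau(t_r)$ lying strictly between $\max\{TCP(w)\}$ taken over the read-write transactions $w$ whose writes $t_r$ read, and $\min\{TCP(v)\}$ taken over the read-write transactions $v$ that overwrote one of $t_r$'s reads. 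The third step is routine: along every SG edge this potential strictly increases --- the read-write/read-write case is Lemma~\ref{lem:order}; the $w \rightarrow_{wr} t_r$ and $t_r \rightarrow_{rw} v$ cases hold by the choice of $\tau(t_r)$; and the read-only/read-only case is vacuous. A graph admitting a potential that strictly increases along every edge has no directed cycle, so the SG of $H$ is acyclic and the SG test yields serializability.

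The main obstacle is that the preceding lemma only handles inserting \emph{one} read-only transaction into the purely read-write graph, whereas a cycle in the SG of $H$ could in principle weave through several read-only transactions separated by read-write ones (the non-conflict observation forbids two adjacent read-only vertices, but not this). The potential argument side-steps this, but it relies on showing the second step is sound, i.e., that $\tau(t_r)$ is well defined: every read-write transaction $w$ whose write $t_r$ read commits before every read-write transaction $v$ that overwrites one of $t_r$'s reads. Equivalently, the versions $t_r$ returns across all accessed partitions must form a single prefix of the read-write serialization order rather than the inconsistent patchwork of Figure~\ref{fig:rot-motivate}. This is exactly where the CD-vector lemmas and the verification algorithm are needed: I would argue, in the style of the preceding lemma's proof, that if some overwriter $v$ had $TCP(v) \le TCP(w)$, then the batch $t_r$ read at $w$'s partition would carry --- directly or transitively, by Lemmas~\ref{lem:cdvector} and~\ref{lem:cdvector2} --- a cross-partition dependency onto $v$'s partition that the LCE of the stale batch $t_r$ read there fails to meet, so Algorithm~\ref{alg:verify} would have flagged it and the second round would have re-fetched a batch at or beyond $v$; then $t_r$ could not have returned the overwritten version, a contradiction. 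With this separation established, the potential argument of the second and third steps closes the theorem.
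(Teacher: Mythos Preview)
Your route departs from the paper's. The paper simply iterates the preceding lemma, inserting the read-only transactions into $SG^{rw}$ one at a time; you correctly flag that the lemma, as stated, only covers insertion into the pure read-write graph, and you try to bypass the issue with a potential that keeps $TCP$ on read-write vertices and sandwiches each $t_r$ strictly between its writers and its overwriters.

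The gap is in the sandwich step: the inequality $\max_{w} TCP(w) < \min_{v} TCP(v)$ need not hold, and the CD-vector lemmas do not rescue it. Lemmas~\ref{lem:cdvector} and~\ref{lem:cdvector2} propagate \emph{conflict-chain} dependencies into the CD vector, not arbitrary $TCP$ orderings. Take two partitions $X,Y$ with no distributed transaction touching both; let $v$ be a local write at $Y$ with $TCP(v)=50$ and $w$ a local write at $X$ with $TCP(w)=100$. A read-only $t_r$ that receives a stale (pre-$v$) batch from $Y$ and a fresh (post-$w$) batch from $X$ passes Algorithm~\ref{alg:verify} because the $X$-batch has $\mathcal{V}^X[Y]=-1$. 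Then $w \rightarrow_{wr} t_r \rightarrow_{rw} v$ with $TCP(w)>TCP(v)$, so $\tau(t_r)$ is undefined. The history is still serializable (order $w,\,t_r,\,v$ works since $w$ and $v$ do not conflict), but your potential cannot witness it. What the preceding lemma's argument really delivers for each $t_r$ is only ``no $SG^{rw}$-path from any overwriter $v$ back to any writer $w$,'' which is strictly weaker than a $TCP$ separation. If you want to keep the potential idea, you must replace $TCP$ by something tied to the conflict partial order rather than real time, and then argue that a single ranking can simultaneously sandwich \emph{all} read-only transactions; that is a nontrivial extension, not a routine step.
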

\begin{proof}
Starting from a serializability graph (SG) of read-write transactions only, we know that $SG$ is serializable from Section~\ref{sub:rw-correctness}. Then, by Lemma~\ref{lem:cdvector2} we know that adding a read-only transaction to $SG$ would not lead to a conflict. Repeating this for every read-only transaction gets us to the serializability graph with all read-write and read-only transactions that includes no cycles. 
\end{proof}

\subsubsection{\textbf{Guarantee of two-round reads}.} TransEdge needs at most two rounds to produce a consistent read-only transaction response. We now prove that the responses received in the second round would not lead to further unsatisfied dependencies and therefore, there is never a need for a third round.

\begin{theorem}
In TransEdge, if there is a second round in the read-only transaction, then there will be no further dependencies and a third round is never needed.
\end{theorem}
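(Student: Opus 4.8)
The plan is to prove the stronger claim that after the second round the batches the client holds form a mutually consistent snapshot of the accessed partitions; then re‑running Algorithm~\ref{alg:verify} on those batches would flag nothing, so the second round's responses can never trigger a third round.

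Fix a distributed read‑only transaction over accessed partitions $X_1,\dots,X_n$. Write $\mathcal{V}_k$ and $\ell_k$ for the CD vector and LCE returned by $X_k$ in round one, and set $P_k := \max\bigl(\ell_k,\ \max_m \mathcal{V}_m[X_k]\bigr)$ over the accessed $X_m$. First I would show that after round two the state held for $X_k$ is a batch $\hat b_k$ with $LCE(\hat b_k)\ge P_k$: a partition with no round‑one dependency pointed at it keeps its round‑one batch, whose LCE already satisfies $\ell_k\ge \mathcal{V}_m[X_k]$ for every $m$ by the round‑one check; a re‑fetched partition is read at the batch where prepare group $\max_m\mathcal{V}_m[X_k]$ commits, whose LCE equals that value by the ordering constraint (Definition~\ref{def:ordering_constraint}) together with the rule that each committing batch advances the LCE to the prepare‑batch of exactly one prepare group.

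The heart of the argument is the invariant: for all accessed $X_i,X_j$, the CD vector $\hat{\mathcal{V}}_i$ of $\hat b_i$ satisfies $\hat{\mathcal{V}}_i[X_j]\le P_j$. If $X_i$ was not re‑fetched then $\hat{\mathcal{V}}_i=\mathcal{V}_i$ and $\mathcal{V}_i[X_j]\le P_j$ by definition of $P_j$. If $X_i$ was re‑fetched, $\hat b_i$ is the batch where prepare group $P_i$ commits at $X_i$, so by the ordering constraint every distributed transaction reflected in $\hat{\mathcal{V}}_i$ lies in a prepare group $\le P_i$ at $X_i$; for the dependencies already present when round one observed the transaction that forced the re‑fetch, Lemma~\ref{lem:cdvector2} (transitive closure of CD vectors) gives $\mathcal{V}_m\ge V_{P_i}^{X_i}$ for the round‑one partition $X_m$ responsible, hence $V_{P_i}^{X_i}[X_j]\le \mathcal{V}_m[X_j]\le P_j$. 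Granting the invariant, $\hat{\mathcal{V}}_i[X_j]\le P_j\le LCE(\hat b_j)$ for every accessed pair, so every dependency in the post‑round‑two snapshot is already satisfied; Algorithm~\ref{alg:verify} would return an empty unsatisfied set, and no third round is entered. As a by‑product this re‑establishes serializability of the returned result, matching the earlier correctness theorem.

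I expect the main obstacle to be bounding the remaining piece of $\hat{\mathcal{V}}_i$: the dependencies contributed by prepare groups at $X_i$ that prepared no later than $P_i$ but had not yet committed at $X_i$ when round one observed the triggering transaction. For such a transaction $t'$ with prepare batch $r$ at another accessed partition $X_j$ one must show $r\le P_j$, and this cannot be read off the CD vectors alone: it needs a causality/timing argument using (i)~atomicity of the 2PC commit, so $t'$ appears in a committed segment only after its coordinator's decision; (ii)~the fact that the triggering transaction committed at $X_m$ before round one read $X_m$, hence had prepared at $X_i$ (in batch $P_i$) by then, so $t'$—preparing at $X_i$ no later—had already entered $X_i$'s pipeline; and (iii)~that round two's reads are strictly later than round one's, together with per‑partition log monotonicity, to conclude $t'$'s prepare at $X_j$ is already reflected in some round‑one CD vector. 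Making this last implication airtight—ruling out that a genuinely new cross‑partition dependency to an accessed partition is introduced by a batch fetched in round two—is the delicate point, and it is where the ordering constraint, the piggybacking of CD vectors onto prepared messages (Algorithm~\ref{alg:derive}), and log monotonicity must be combined.
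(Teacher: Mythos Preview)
Your route differs substantially from the paper's and is considerably more elaborate. The paper's proof is a three-line contradiction: suppose a round-2 batch $b^X_i$ carries an unsatisfied dependency to some $b^Y_j$; let $b^Z_k$ be the round-1 batch whose CD entry for $X$ triggered the fetch of $b^X_i$. Because the \textsf{prepared} message from $X$ piggybacked $V^X_i$ and Algorithm~\ref{alg:derive} takes a pairwise max, one has $V^Z_k \ge V^X_i$ entrywise; hence $V^Z_k[Y]\ge j$, so $b^Y_j$ was already requested in round~2---contradiction. That single domination fact is the whole proof; there is no invariant, no $P_k$, and no timing argument.

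The ``main obstacle'' you flag is real under \emph{your} reading of the round-2 protocol---you take the re-fetched batch to be the one where prepare group $P_i$ \emph{commits}, whose CD vector can strictly exceed the vector that was piggybacked. The paper, however, specifies that round~2 asks partition $X_j$ for ``batch number $\mathcal{V}_{b_i}^{X_i}[X_j]$'', i.e., literally the \emph{prepare} batch; for that batch the piggybacked vector \emph{is} $V^X_i$, so $V^Z_k\ge V^X_i$ is immediate and your delicate case never arises. Under your reading the gap you name is genuine and your sketch does not close it (the causality argument in your last paragraph would have to rule out a transaction $t'$ that prepared at $X_i$ before $P_i$ but committed there between $P_i$ and the fetched commit batch, carrying a high prepare index at some accessed $X_j$ that no round-1 CD vector ever saw; log monotonicity and atomic 2PC alone do not prevent this). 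Under the paper's reading, the proof collapses to the one-liner above, and your machinery is unnecessary.
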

\begin{proof}
Assume to the contrary that after receiving a batch $b^X_i$ in the second round that there is an unsatisfied dependency in $b^X_i$ that was not received in either the first or second rounds. This means that the CD vector of $b^X_i$ contains a dependency to a batch $b^Y_j$ that is not part of the response in the first or second rounds. However, consider batch $b^Z_k$ which is the batch in the first round that had the missing dependency leading to needing to ask for $b^X_i$ in the second round. Based on the TransEdge protocol, the dependencies of $b^Z_k$ include all the dependencies of its dependents (since the pairwise max was taken when calculating the CD vector of $b^Z_k$). Therefore, $b^Y_j$ is part of the dependencies of $b^Z_k$ and if it was not part of the first round responses, then the client would have asked for it in the second round. This is a contradiction since there is no need to ask for $b^Y_j$ in a third round.
\end{proof}

\subsubsection{\textbf{Freshness in TransEdge}}
The read-only transaction algorithm in TransEdge can guarantee consistency across a database snapshot. However, it cannot guarantee that the response from participating replicas always includes the latest updates. 

\begin{sloppypar}
Malicious participating replicas might return an old---albeit consistent---snapshot. A way to
enforce a freshness guarantee is to
add a timestamp to each batch that represents the time this
snapshot committed. Although, there are many other ways to solve this problem we consider this orthogonal to the scope of this project. The leader includes a timestamp of the current
time in the batch when it is sent to the other BFT replicas. The BFT
replicas verify that the timestamp in the batch is within a window of
time compared to their clock. (For example, a BFT replica $r$ accepts a
batch only if the timestamp of the batch is within 30 seconds of $r$'s
clock.). This ensures that a malicious leader is restricted to a
specific window when choosing the timestamp. Using these timestamps
and the configured time window, the read-only clients can establish a
guarantee on the freshness of the data. Note that such a guarantee
would not ensure that the returned batch is the most recent one, but it
ensures that the batch was committed within a recent time window.
\end{sloppypar}

\cut{This is because it is not possible that the batches
from the second round would result in unsatisfied dependencies that
conflict with the read-only transaction. 
}

\cut{
There are two possibilities at the end of the first round: either the read-only transaction successfully terminated or a batch number was found for an inconsistent read-only transaction key. In the latter case, the second round allows us to request for the key with the exact batch number when it was committed. Thus, a read-only transaction would terminate in the second round and does not need additional rounds to complete. 
}

\cut{
\subsection{\textbf{Reducing the Dependency Vector Size}}
\label{sub:opt}

A large number of partitions, makes it possible for the size of
dependency vectors to grow to cause a communication overhead. We can reduce the 
size of the dependency vector in several ways.
The first optimization is to send only the relevant entries when a dependency
vector is sent from a leader to a client. 
The relevant entries are the ones corresponding to the partitions accessed by the client's
read-only transaction. The client algorithm does not need the other
entries.

The other optimization is to reduce the size of the dependency vector
in the edge nodes. Assume that the number of edge nodes is $n$. In
the normal-case---what we presented so far---each edge node will
maintain a dependency vector with a size equal to $n$.
However, in many workloads, distributed transactions exhibit some
locality features where groups of partitions are typically accessed
together. For example, for some edge node $i$, it might be the case
that most distributed transactions that access $i$ only access $m$
other edge nodes, where $m$ is much smaller than $n$ ($m<<n$). In
this case, we can maintain only the dependencies of these $m$
partitions in $i$, rather than maintaining all $n$ dependencies.

A consequence is that
a distributed read-only transaction that accesses edge node $i$ and
another partition outside of its $m$ partitions cannot be executed
with the snapshot read-only algorithm. Rather, such read-only
transactions will have to execute as a regular read-write transaction
using 2PC (Section~\ref{sec:txn}).
}



\cut{
\begin{table}[!t]
\begin{center}
        \begin{tabular}{ c | c  c  c  c  c  c  c  }
        &        $C$ & $O$ & $V$ & $T$ & $I$ & $S$ & $M$ \\ \hline
        $C$ & 0   & 19  & 62  & 113 & 134 & 183 & 249 \\
        $O$ & 19  & 0   & 117 & 104 & 133 & 161 & 221 \\
        $V$ & 62  & 117 & 0   & 172 & 71  & 244 & 182 \\
        $T$ & 113 & 104 & 172 & 0   & 214 & 67  & 124 \\ 
        $I$ & 134 & 133 & 71  & 214 & 0   & 179 & 120 \\ 
        $S$ & 183 & 161 & 244 & 67  & 179 & 0   & 54  \\ 
        $M$ & 249 & 221 & 182 & 124 & 120 & 54  & 0   \\ 
        \end{tabular}
        \caption{The average Round-Trip Times in milliseconds for
every pair of the 7 datacenters (zones).}
        \label{tab:rtts}
\end{center}
\end{table}
}

\section{Experimental Evaluation}\label{sec:eval}

\cut{In this section, we present a performance evaluation of a prototype of TransEdge.We report latency and throughput results while running workloads for local and distributed read-write transactions as well as distributed read-only transactions.} TransEdge uses BFT-SMaRt~\cite{bft-smart-lib} as the BFT system to commit batches to the state-machine replication log within clusters. Therefore, we inherit the fault-tolerance and state-machine replication processes of BFT-SMaRt. We also compare the performance of TransEdge's read-only transactions with a coordination-based read-only transaction protocol. We term this system 2PC/BFT, and it aims to mimic how existing hierarchical BFT systems perform read-only operations~\cite{nawab2019blockplane,gupta13resilientdb,
al2017chainspace, amir2010steward,amiri2019parblockchain} (see Section~\ref{sub:compare}.) The 2PC/BFT system has the same structure as TransEdge, however, the system performs read-only transactions by coordinating with other leaders in other partitions via two-phase commit. This allows us to contrast the performance of read-only transactions in TransEdge with those executed in a coordination based system. We also compare Read-only transactions in TransEdge with Augustus\cite{padilhaaugustus2013}, a BFT system that supports fast read-only transactions and has similar data partitioning system as TransEdge.
\cut{
The prototype implementation of TransEdge does not implement leader election protocol for the cluster partition, however, the BFT-SMaRt\cite{bft-smart-lib} used to process PBFT performs leader election when processing messages within the partition.
The idea of a malicious 2PC leader is not in the scope of this implementation. 
}
\subsection{Experimental setup}

\textbf{Setup.}
To evaluate transactions in TransEdge we use 5 clusters with 7 replicas in each cluster. The 7 replicas in a cluster allows the cluster to support Byzantine faults of up to 2. Experiments are performed on ChameleonCloud\cite{keahey2020lessons} using Cascade Lake R machines with Xeon Gold 6240R processor, 192GB RAM, 96 Threads. Transaction workload is generated by 2 clients running 10 threads for read-only transactions and read-write transactions.

\textbf{Data and Transaction Model.}
There are two main types of transactions that are evaluated in TransEdge: Read-write transactions and read-only transactions.
Read-write transactions are of the following types: local write-only transactions, local read-write transactions, and distributed read-write transactions. 
Write-only and local read-write transactions are transactions that operate on keys local to a cluster. Distributed read-write transactions need to be executed in coordination with other clusters as they contain operations performed on keys from other clusters.
Read-only transactions read \textit{n} unique keys from \textit{m} clusters. 

Each read-write transaction contains 5 read and 3 write operations distributed across 5 clusters. Each read-only transaction contains 5 read operations reading 1 key from each cluster.

\cut{
\textbf{Initialization.}
Write-only transactions are used to initialize the database before beginning the experiments on local and distributed read-write transactions. This allows the local and distributed read-write transactions to operate on keys already existing in the database.
}

\textbf{Workload.}
\cut{
To stress the system during experiments, the distributed read-write transactions workload is designed to be a coordination intensive workload. A batch of distributed read-write transactions has on average 85\% transactions which are distributed (requiring two-phase commit for execution) and 15\% transactions that are local to a cluster. Each transaction in this workload contains 5 read and 3 write operations. The transaction batches processed by TransEdge are uniformly distributed across all clusters so as to not skew batch processing.}

The workload to test the system contains $500k$ transactions. Total number of keys in the clusters is $1M$. Keys are uniformly distributed across the clusters using hashing. Key and values used in the transactions have a size of 4 bytes and 256 bytes respectively. The workload generator is inspired by YCSB~\cite{ycsb2010} and its transactional extensions~\cite{das2011albatross}. The workload generator generates operations based on the provided ratios. A key for each operation is also picked randomly. Then, a group of operations are bundled into a transaction~\cite{das2011albatross}.

\cut{
Figures \ref{fig:throughput_wo_lrwt} and \ref{fig:latency_wo_lrwt} show results where write-only transactions are executed before the local read-write workloads are executed. Figures~\ref{fig:throughput_drwt} and~\ref{fig:latency_drwt} describe results where distributed transaction workload is executed after the execution of the local write-only transactions. Results for the read-only transaction described in figure \ref{fig:latency_rot} are obtained by first running the write-only transaction workload and subsequently executing the read-only transaction workload.
}


 
\subsection{Experimental Results}
\cut{We start by showing results for read-write transactions to establish a baseline of TransEdge's performance and then evaluate the performance of read-only transactions which is the main contribution of this paper.}


\textbf{Read-only transactions.} 
We first present the results of experiments for read-only transactions---the main focus of this paper. The results measure the end-to-end latency of read-only transactions from the client's side. 

The results of the first set of experiments is shown in Figure~\ref{fig:rot_2pc}. In it, we measure the read-only transaction latency of TransEdge and demonstrate how it outperforms read-only transactions that are performed as 2PC/BFT transactions. 
TransEdge outperforms 2PC/BFT by up to 24x when accessing two clusters and outperforms 2PC/BFT by 9x when accessing five clusters. 

The reason for this performance difference is that running read-only transactions as regular transactions incurs overheads due to BFT agreement and 2PC coordination that are similar to read-write transactions. Read-only transactions executed as regular transactions require coordination among the nodes involved in the transaction. The cost of this coordination leads to a significant increase in latency of read-only transactions.

Figure~\ref{fig:rot_2pc} also shows that the latency increases as the number of clusters accessed by the read-only transactions increases. This is expected as the number of replicas that need to be involved in the 2PC process would involve more messages exchanged by the replicas. This is reflected in the graphs for latency of both TransEdge and 2PC/BFT read-only transactions. The average latency of the snapshot read-only transactions over 2PC/BFT is between 69-82ms when accessing more than one cluster. The latency is quite large when compared to the snapshot read-only transactions executed by TransEdge. The main reason for the large overhead in latency is due to the amount of coordination time that is needed to coordinate regular transactions as opposed to efficient snapshot read-only transactions in TransEdge.
\cut{
\begin{table}[hb]
\begin{center}

  
  \begin{tabular}{|l|c|c|c|c|}
  \hline
  Number of clusters & 2 & 3 & 4 & 5 \\ \hline
  Latency Speedup & 20.7x & 14.7x & 10.2x & 8.65x \\ \hline
\end{tabular}

  \caption{Comparison of speedup in latency of Read-only Transactions as executed by TransEdge over 2PC/BFT}
  \label{tab:rot_speedup}
\end{center}

\end{table}

Table~\ref{tab:rot_speedup} summarizes the comparison of speedups between snapshot read-only transactions executed by TransEdge over the 2PC/BFT system. As seen from the table the speedup in latency depends on the number of clusters accessed. The highest is achieved when accessing only 2 clusters and the lowest when accessing the maximum (in our experiments) of 5 clusters. It can be seen that even when accessing 5 clusters the speedup in latency of TransEdge over 2PC/BFT is $8.6$x. 
}

\begin{figure}[!t]
\begin{center}
\includegraphics[scale=0.45]{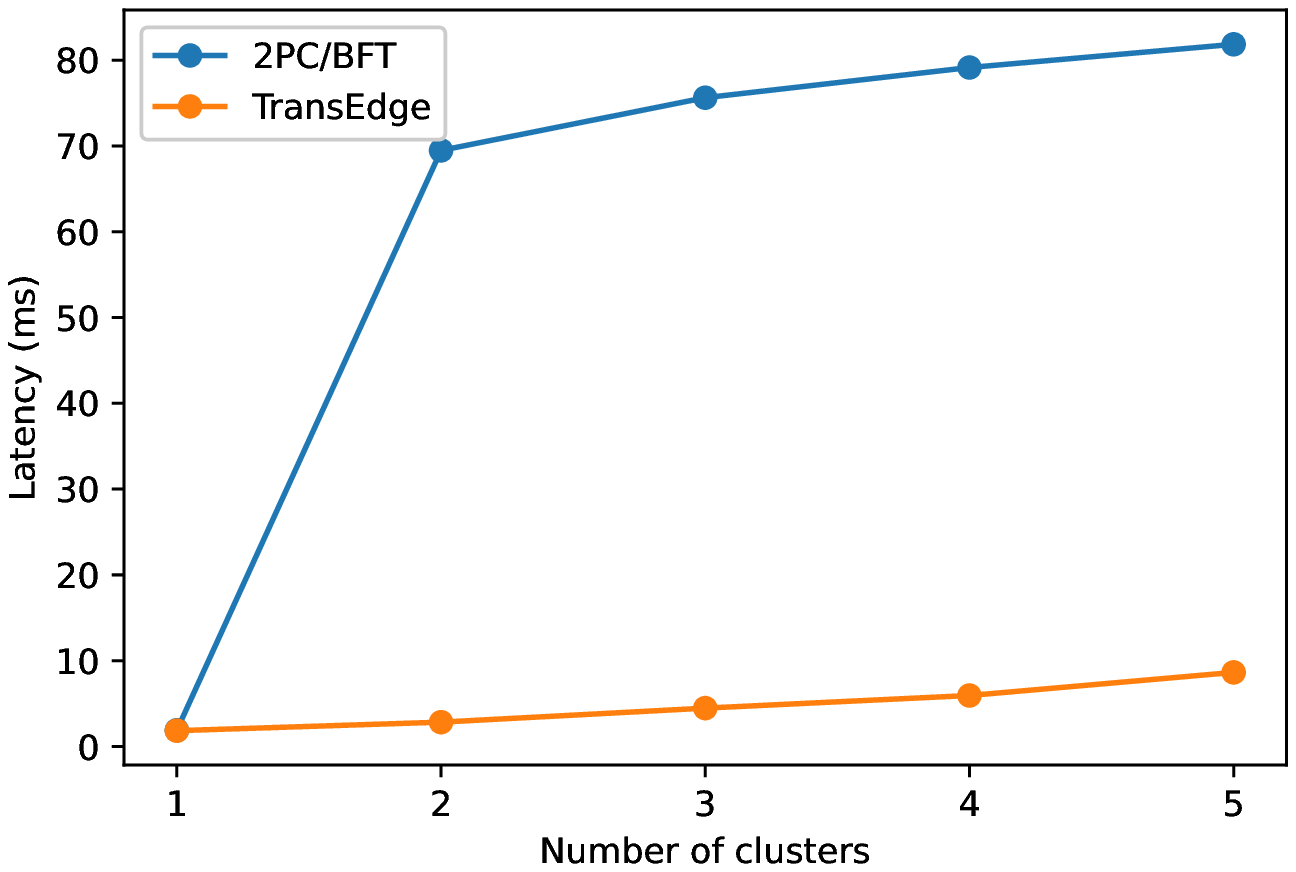}
\caption{Comparison of average latency of read-only transactions executed over a 2PC/BFT system and TransEdge.} 
\label{fig:rot_2pc}
\end{center}
\end{figure}

\textbf{Latency of read-only transaction rounds.}
Since read-only transactions have a maximum of two rounds, we perform experiments to measure how much each round contributes to the average latency of read-only transactions. In Figure~\ref{fig:latency_rot}, we show the latency of the first round of the read-only transactions as the light blue bar and show the additional latency of verifying and correcting the inconsistencies in the second round of execution as the orange bar. Not all read-only transactions in TransEdge require round 2 and the orange bar in the figure computes the effective latency of round-2 communication in TransEdge. This is computed by multiplying the percentage of read-only transactions that require round-2 with the additional latency of round-2 execution. 

We also compare the latency of read-only transactions in TransEdge with Augustus\cite{padilhaaugustus2013}. The latency of read-only transactions in Augustus are shown in the dark blue bar in figure~\ref{fig:latency_rot}. Figure~\ref{fig:rot-t-vs-augustus} shows the corresponding comparison of throughput in TransEdge and Augustus. The improvement of read-only transactions in TransEdge over Augustus can be attributed to a lock-free and coordination-free execution of read-only transactions. This is seen in the higher throughput and lower latency in TransEdge when compared to Augustus when accessing a single cluster. A single cluster Read-only transaction refers to single partition data access. We see in both figures~\ref{fig:latency_rot} and \ref{fig:rot-t-vs-augustus} that Augustus performs poorly even on single partition reads and the performance degrades with multipartition reads. Read-only transactions in TransEdge also do not interfere with read-write transaction execution and thus are not affected (or affect) read-write transaction execution. In order to test this hypothesis, we executed a series of long-running read-only transactions with a large set of keys in conjunction with read-write transactions. The results of this experiment are shown in figure~\ref{fig:scanning-rot-t-vs-augustus}. Figure \ref{fig:scanning-rot-t-vs-augustus} shows the effective latency of TransEdge read-only transactions which includes both round 1 and round 2 of the read-only transactions. TransEdge's latency is a function of dependency computation as opposed to Augustus which uses shared locks as a mechanism to coordinate between different clusters. The advantage of TransEdge over Augustus is that TransEdge does not use locking thus ensuring that read-only transactions do not conflict with read-write transactions. This is shown in the percentage of Aborted transactions during the execution of long-running read-only transactions.  The percentage of aborts caused due to conflicting read-only transactions in TransEdge is 0 and a comparison with Augustus is shown in Table~\ref{tab:aborts-t-vs-augustus}. This does not mean that TransEdge read-write transactions do not abort as is shown in figure~\ref{fig:transedge-aborts-drwt}. These experiments show that main benefit of TransEdge comes from coordination-free execution of read-only transactions and this is visible in the increased throughput, lower latency and lower percentage of aborts. Figure~\ref{fig:rot-throughput-latency} shows the impact of network latency on throughput of read-only transactions in TransEdge. The main overhead of read-only transactions in TransEdge is the computation of consistency in read-only transaction keys using the dependency vector thus the drop in throughput is not as high as that of the read-write transactions seen in Figure~\ref{fig:throughput_drwt_latency}.
\begin{table}[hb]
\begin{center}
  \small
  \begin{tabular}{|l|c|c|c|c|c|}
  
  \hline
  Number of clusters & 1 & 2 & 3 & 4 & 5 \\ \hline
  Augustus & 0.8 & 1.3 &  2.15 & 3.4 & 4.27 \\ \hline
  TransEdge & 0 & 0 & 0 & 0 & 0 \\ \hline
\end{tabular}
\vspace{8mm}
  \caption{Comparison of aborts in read-write transactions in TransEdge and Augustus caused due to conflicting read-only transactions.}
  \label{tab:aborts-t-vs-augustus}
\end{center}

\end{table}

  




\cut{The graph in Figure \ref{fig:latency_rot} shows the case where the consistency check and query for the consistent version of the keys are performed for 75\% of read-only transactions. In order to emulate inconsistencies in the system, the distributed read-write transactions are executed in conjunction with the read-only transaction workload. The execution of the distributed read-write transactions created the inconsistencies in the database which triggers the second round queries for the read-only transactions. 

Figure \ref{fig:latency_rot_consistent} on the other hand shows the latency of read-only transaction execution when executed on a system where no new writes were written. Figure \ref{fig:latency_rot_consistent} aims to show the overhead of performing the second round of the read-only transactions without the additional query to retrieve consistent data from the TransEdge clusters.}

\cut{
The throughput of ready-only transactions decreases with the increase in the number of clusters accessed as shown in Figure \ref{fig:throughput_rot}. This is expected as the overhead of coordination and verifying the consistency of reads across clusters increases.
In particular, the throughput that a single client can achieve drops from 260 read-only transactions per second when accessing one cluster to 36 read-only transactions per second when accessing five clusters.}

\cut{The benefit of our design of read-only transactions becomes clear by observing their latency. Read-only transactions can get a consistent response within 28ms even while read-only transactions span 5 different clusters. Compared to read-write transactions where latency is beyond 160ms, read-only transactions are much more efficient. The reason for this efficiency is the design of being commit-free and non-interfering. }


\begin{figure}[!t]
\begin{center}
\includegraphics[scale=0.45]{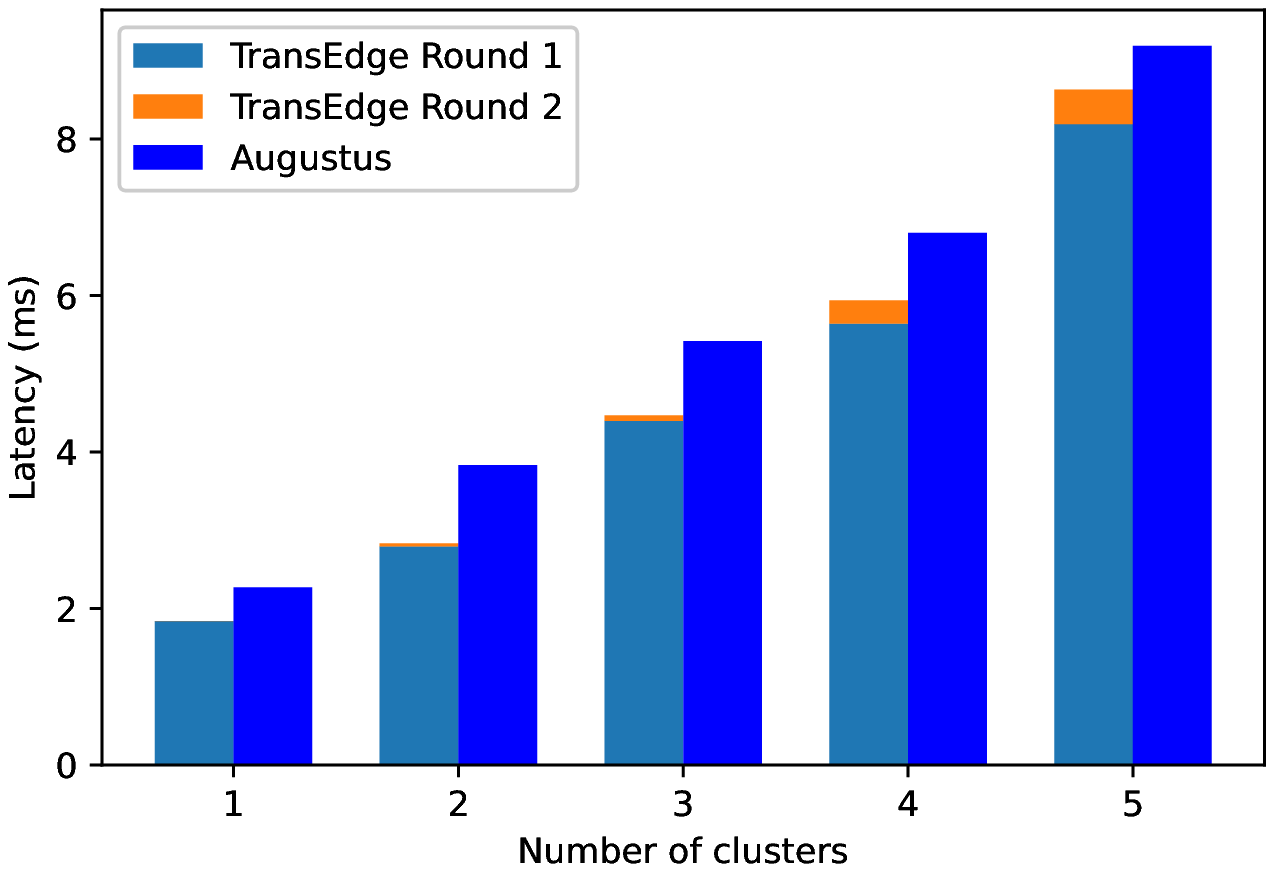}
\caption{Comparison of average latency (in milliseconds) of read-only transactions in TransEdge and Augustus\cite{padilhaaugustus2013} when varying the number of clusters accessed.} 
\label{fig:latency_rot}
\end{center}
\end{figure}


\begin{figure}[!t]
\begin{center}
\includegraphics[scale=0.45]{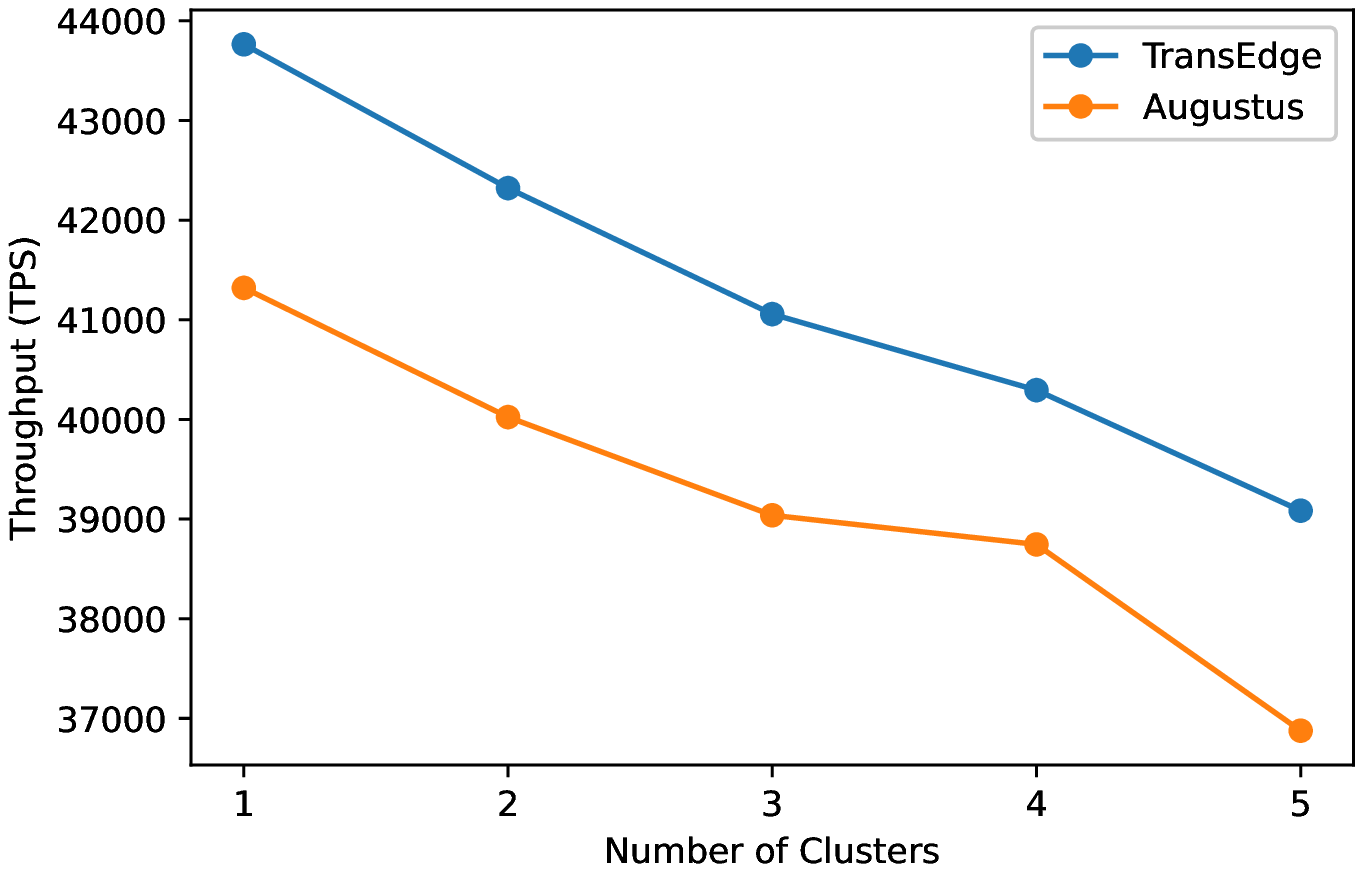}
\caption{Comparison of average throughput (in transactions per second) of read-only transactions in TransEdge and Augustus\cite{padilhaaugustus2013}. The horizontal axis describes the number of clusters accessed by the read-only transactions.} 
\label{fig:rot-t-vs-augustus}
\end{center}
\end{figure}

\begin{figure}[!t]
\begin{center}
\includegraphics[scale=0.45]{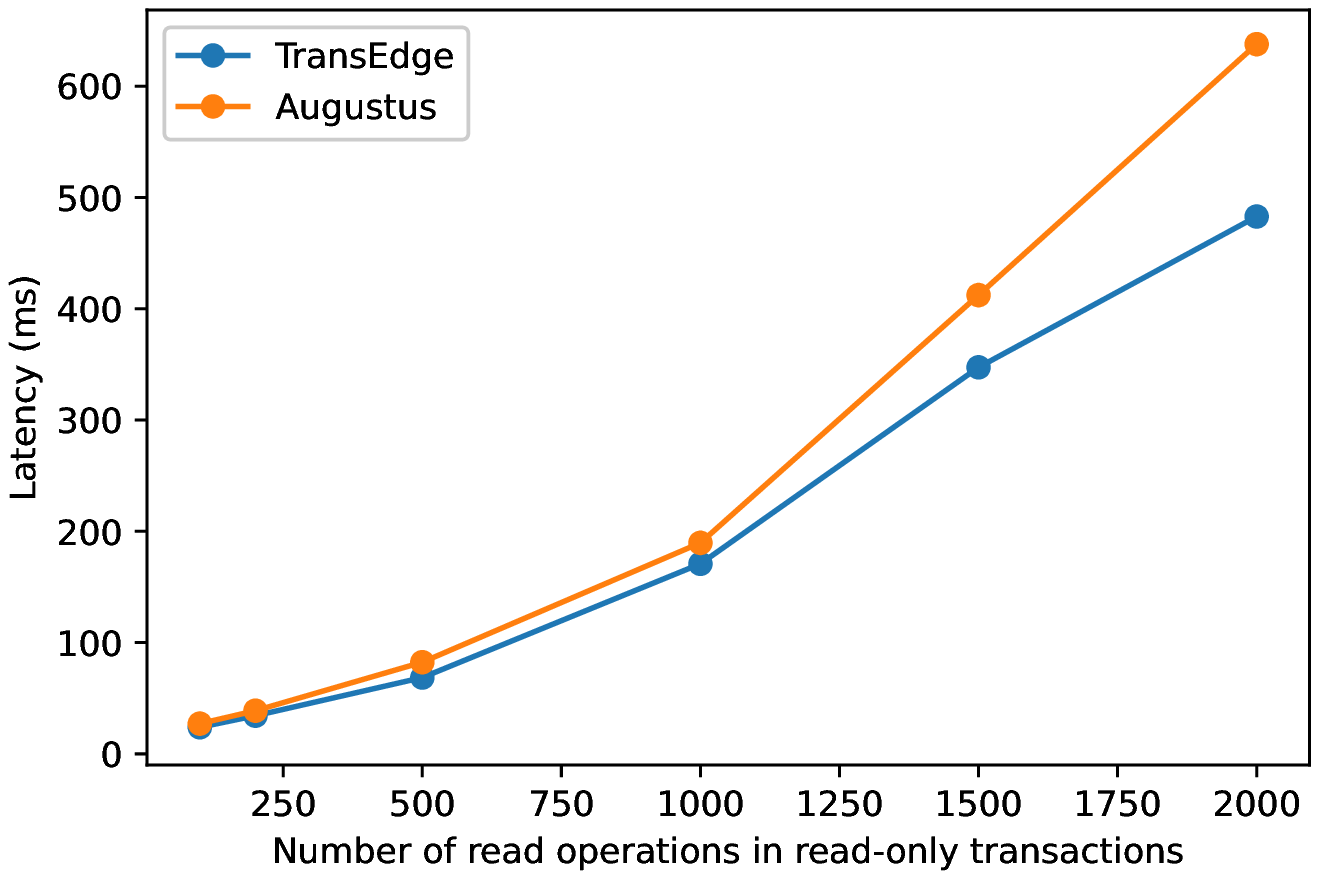}
\caption{Comparison of average latency (in milliseconds) of long-running read-only transactions in TransEdge and Augustus\cite{padilhaaugustus2013}.} 
\label{fig:scanning-rot-t-vs-augustus}
\end{center}
\end{figure}
\vspace{5mm}
\begin{figure}[!t]
\begin{center}
\includegraphics[scale=0.45]{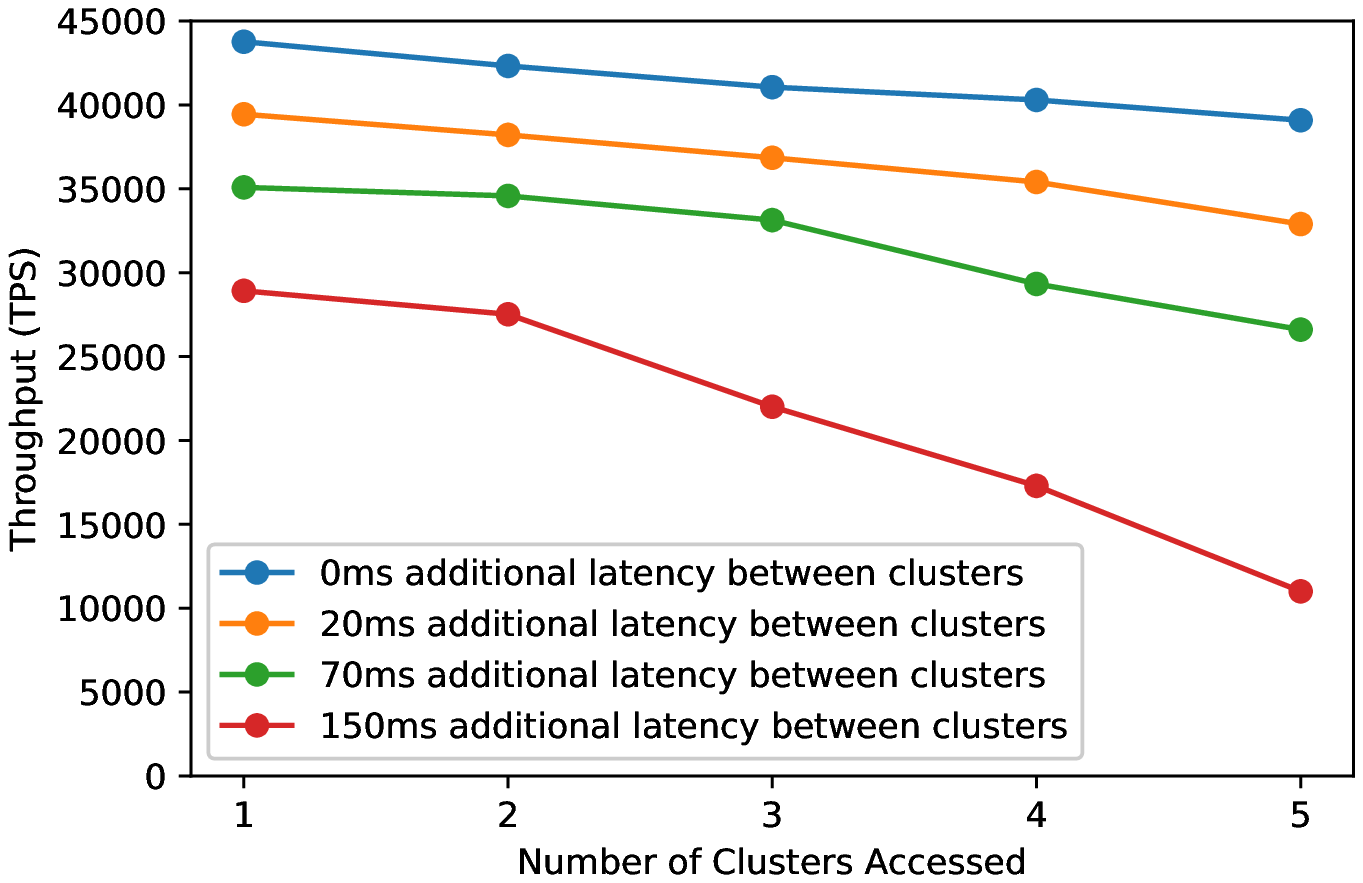}
\caption{Variation in throughput of read-only transactions as the latency between clusters is increased.} 
\label{fig:rot-throughput-latency}
\end{center}
\end{figure}



\textbf{Write-only and local read-write transactions.}
%
Figure~\ref{fig:throughput_wo_lrwt} show the results of write-only and local read-write transaction workloads. TransEdge processes data in batches and the horizontal axis lists the batch size of transactions executed by the transaction processor. In figure \ref{fig:throughput_wo_lrwt}, we observe that both types of transactions perform similarly. This is because both transaction types are local and undergo the same commitment pattern via the BFT process. Specifically, write-only and local read-write transactions reach their peak throughput of around 2000--2500 transactions per batch. However, write-only transactions perform slightly better than local read-write transactions as the number of batches grow. This is because read-write transactions require more coordination to guarantee serializability in the presence of read operations. Figure~\ref{fig:throughput_wo_lrwt} also shows the average throughput of local read-write transactions in the 2PC/BFT system. The 2PC/BFT system performs similarly to TransEdge as they follow similar steps for commitment. \cut{In both figures as the number of transactions in the batch is increased, the average latency of the transaction execution increases. This is because as we increase the size of the batch, the overhead required to process, coordinate, and apply transactions increases.}

\begin{figure}[!t]
\begin{center}
\includegraphics[scale=0.45]{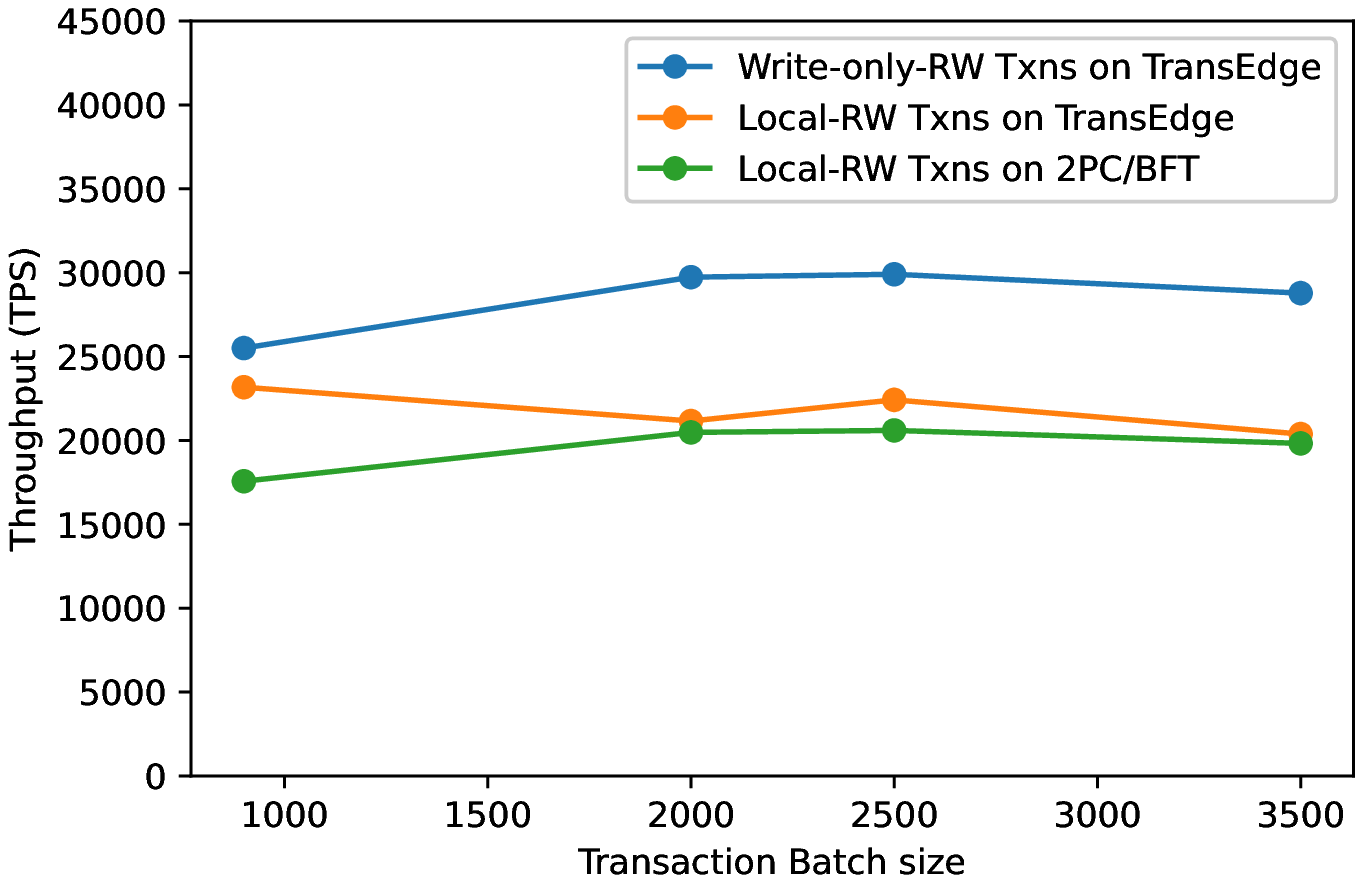}
\caption{Average throughput of write-only, local read-write transactions on TransEdge and 2PC/BFT system.}
\label{fig:throughput_wo_lrwt}
\end{center}
\end{figure}


\textbf{Distributed read-write transactions.}
We performed multiple experiments with distributed read-write transactions. Figures~\ref{fig:latency_drwt_skew} and~\ref{fig:throughput_drwt_skew} are from experiments where we vary the number of read and write operations within the read-write transactions. Figure~\ref{fig:transedge-aborts-drwt} and~\ref{fig:throughput_drwt_latency} are from an experiment where we add latency between clusters to simulate network latency between clusters.
Figure~\ref{fig:throughput_drwt_skew} shows the average throughput of distributed read-write transactions. We notice that the throughput decreases as the data skew moves from read to write-intensive transactions. This is expected as the write-intensive transactions require more coordination between clusters. The range of the read and write operations in these experiments is selected to ensure that each transaction reads or writes some data on each participating cluster. In these figures, we notice that \textit{"R=5,W=1"}, essentially means local-read-write transactions. Thus, this experiment also provides insight into the cost of coordination in read-write transactions. The coordination cost is seen in figure~\ref{fig:latency_drwt_skew}. We notice the increase in latency as the operation skews towards write operations: meaning transactions access more clusters.

Figure~\ref{fig:throughput_drwt_latency} shows the variation in throughput as network latency between clusters increases. We change latency between clusters by $0, 20, 70, 150, 300, 500$ milliseconds. This allows us to simulate geo-distributed participating nodes. We see that the throughput drops considerably as the network latency increases. This is due to the coordination-intensive 2PC used by TransEdge in executing distributed read-write transactions. Figure~\ref{fig:transedge-aborts-drwt} shows the aborts that result as a consequence of the increase in network latency between clusters. Figure~\ref{fig:lrwt-drwt-skew}, shows the performance of mixed workload in where the read-write transactions are varied from a those affecting only a single cluster to a coordination-intensive workload. The graphs in figure\ref{fig:lrwt-drwt-skew} show a much higher throughput for a workload affecting a single cluster (LRWT=100\%, DRWT=0\%). This is because this workload requires no coordination with any other cluster. The lowest throughput is seen in workload comprising 100\% distributed read-write transactions(DRWT) as they pay a much higher cost of 2PC coordination. 

Figure~\ref{fig:drwt-multicluster}, shows the impact of changing the replica size to support multiple levels of Byzantine faults varying from $f=1$ to $f=3$, thus the number of replicas per cluster changes from $4$ to $10$ respectively. We notice that the smaller the number of replicas in a cluster the higher is the throughput. This is due to the reduced cost of intra-cluster coordination to execute read-write transactions.

\cut{In this experiment, there are 10--15\% transactions that are local and the rest are distributed read-write transactions. The throughput results show that the peak throughput is achieved at 700 transactions per batch. With smaller batch sizes, the overhead of coordination prevents committing enough batches to lead to utilizing resources. However, as the batch size increases, the overhead of coordination is amortized and throughput is increased, indicating that compute and communication resources are better utilized. 
}

%
%

\cut{
At lower values of transactions per batch the transaction processor pipeline (local to the cluster) processes these transactions and transactions that require two-phase commit are pushed into the second multi-threaded transactions pipeline. The design of the two-phase commit processor does not allow transactions in later batches to be processed before the those in earlier batches have been processed. This causes transactions to be queued in the two phase commit phase. The values of throughput at lower transaction per batch captures this queue. At higher values of batch size the throughput does not fall as low as at low batch sizes because at higher batch sizes a very large number of transactions are processed by the local transaction processor at remote clusters as well. Thus, the two-phase prepare queue does not form.
Figure~\ref{fig:lrwt-drwt-skew}, shows variations in throughput of a mixed workload where we vary the ratio of local-read-write transactions and distributed read-write transactions. The figure shows the drop in throughput as we increase the ratio of distributed read-write transactions. The drop in throughput is due to the coordination cost of 2PC used in the distributed read-write transactions.
}

\begin{figure}[!t]
\begin{center}
\includegraphics[scale=0.45]{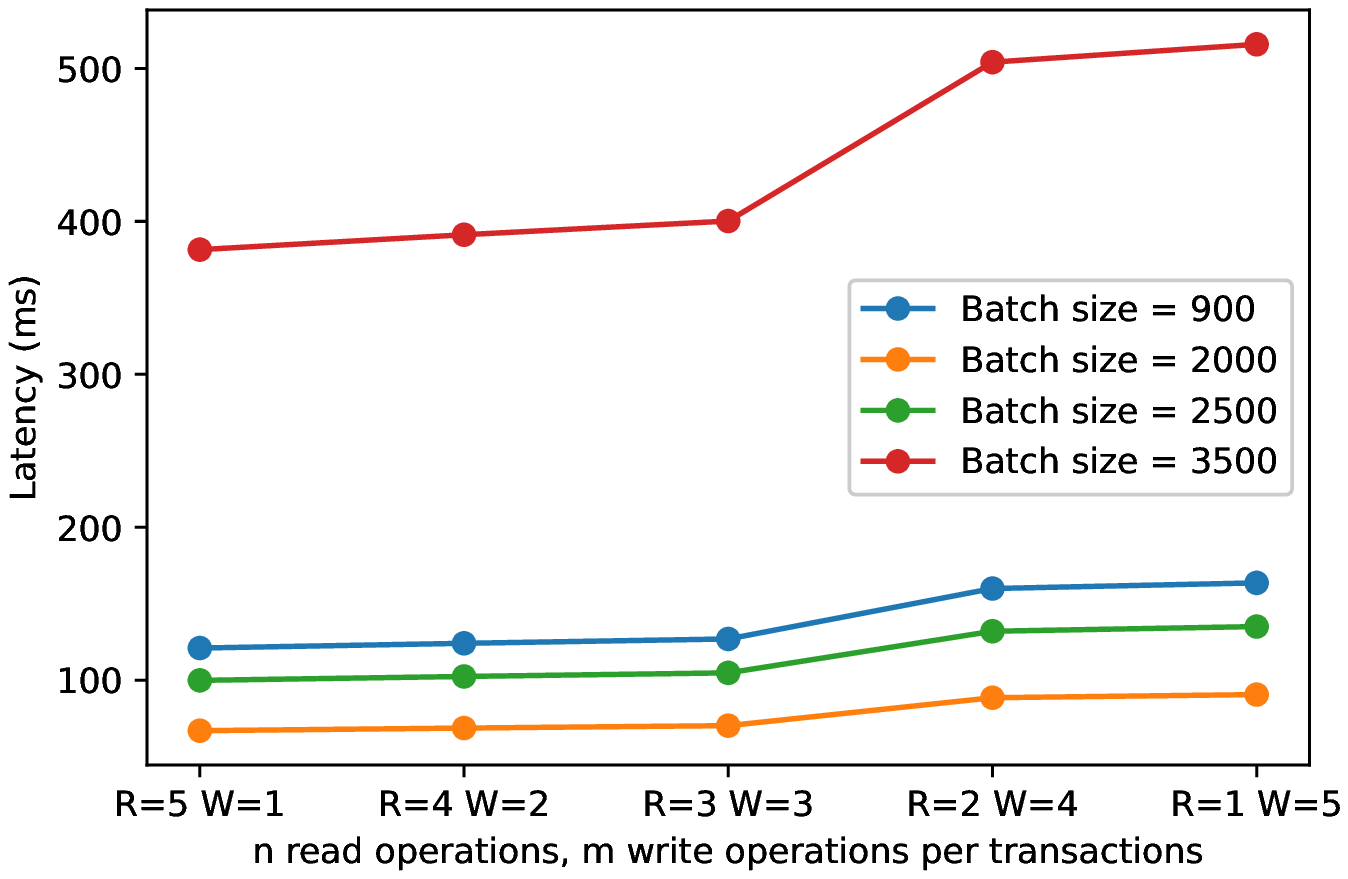}
\caption{Average latency of distributed read-write transactions. The horizontal axis shows the variation of data skew within the transactions.} 
\label{fig:latency_drwt_skew}
\end{center}
\end{figure}

\begin{figure}[!t]
\begin{center}
\includegraphics[scale=0.45]{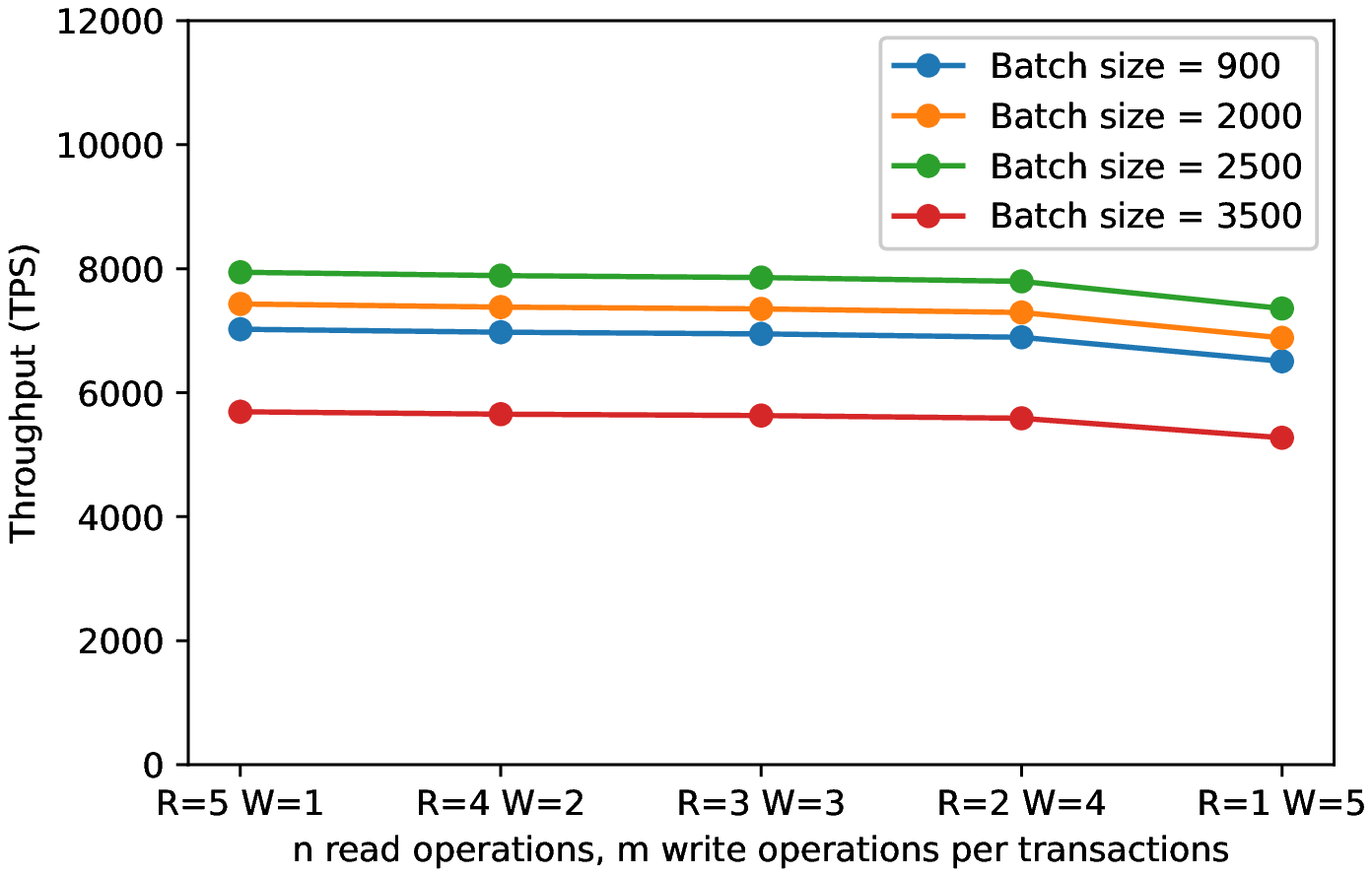}
\caption{Average Throughput of distributed read-write transactions. The horizontal axis shows the variation of data skew within the transactions.} 
\label{fig:throughput_drwt_skew}
\end{center}
\end{figure}

\begin{figure}[!t]
\begin{center}
\includegraphics[scale=0.45]{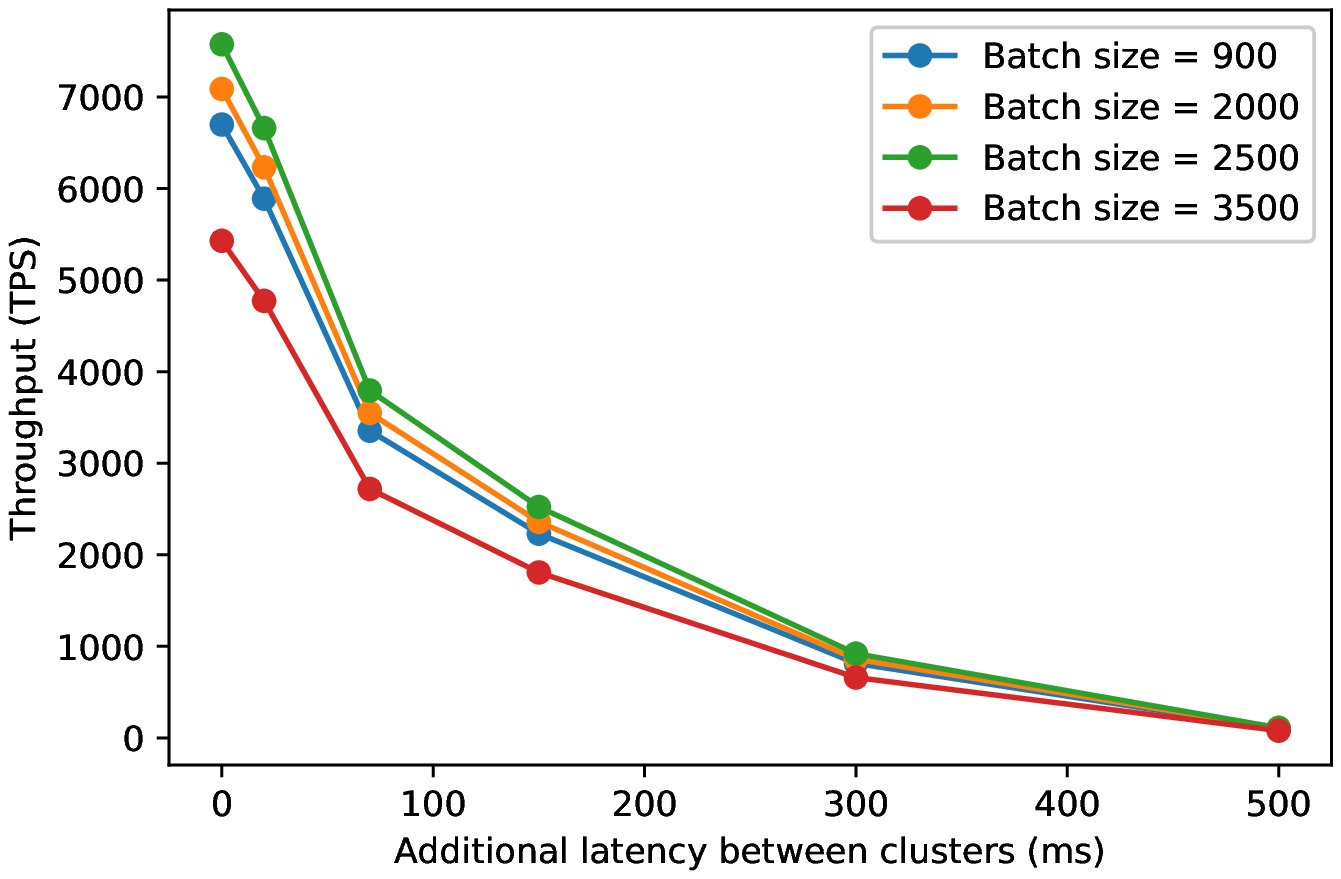}
\caption{Average Throughput of distributed read-write transactions when additional network latency is added between clusters varying between 0ms to 500ms} 
\label{fig:throughput_drwt_latency}
\end{center}
\end{figure}

\begin{figure}[h]
\begin{center}
\includegraphics[scale=0.45]{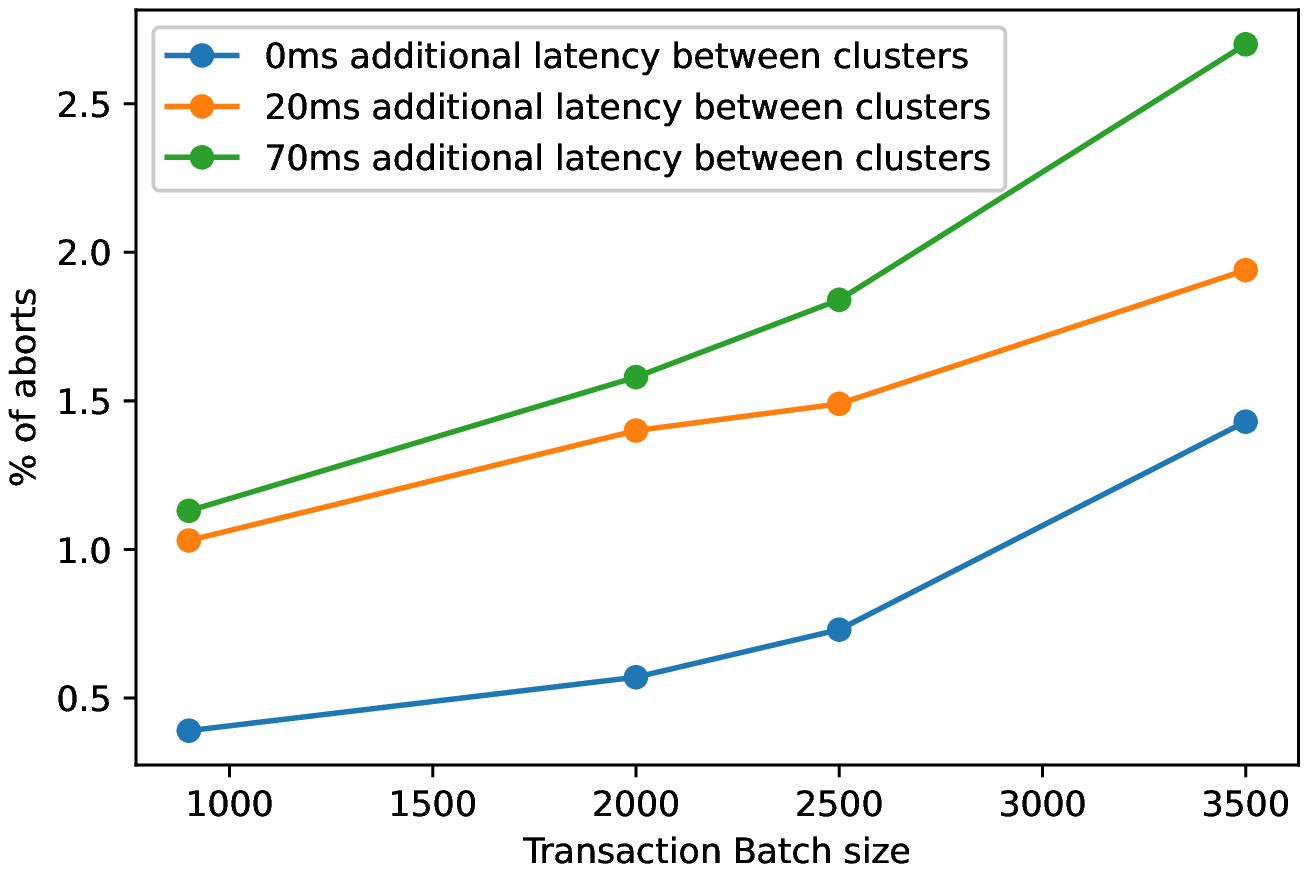}
\caption{Percentage of aborts in read-write transactions in TransEdge.}
\label{fig:transedge-aborts-drwt}
\end{center}
\end{figure}

\begin{figure}[h]
\begin{center}
\includegraphics[scale=0.45]{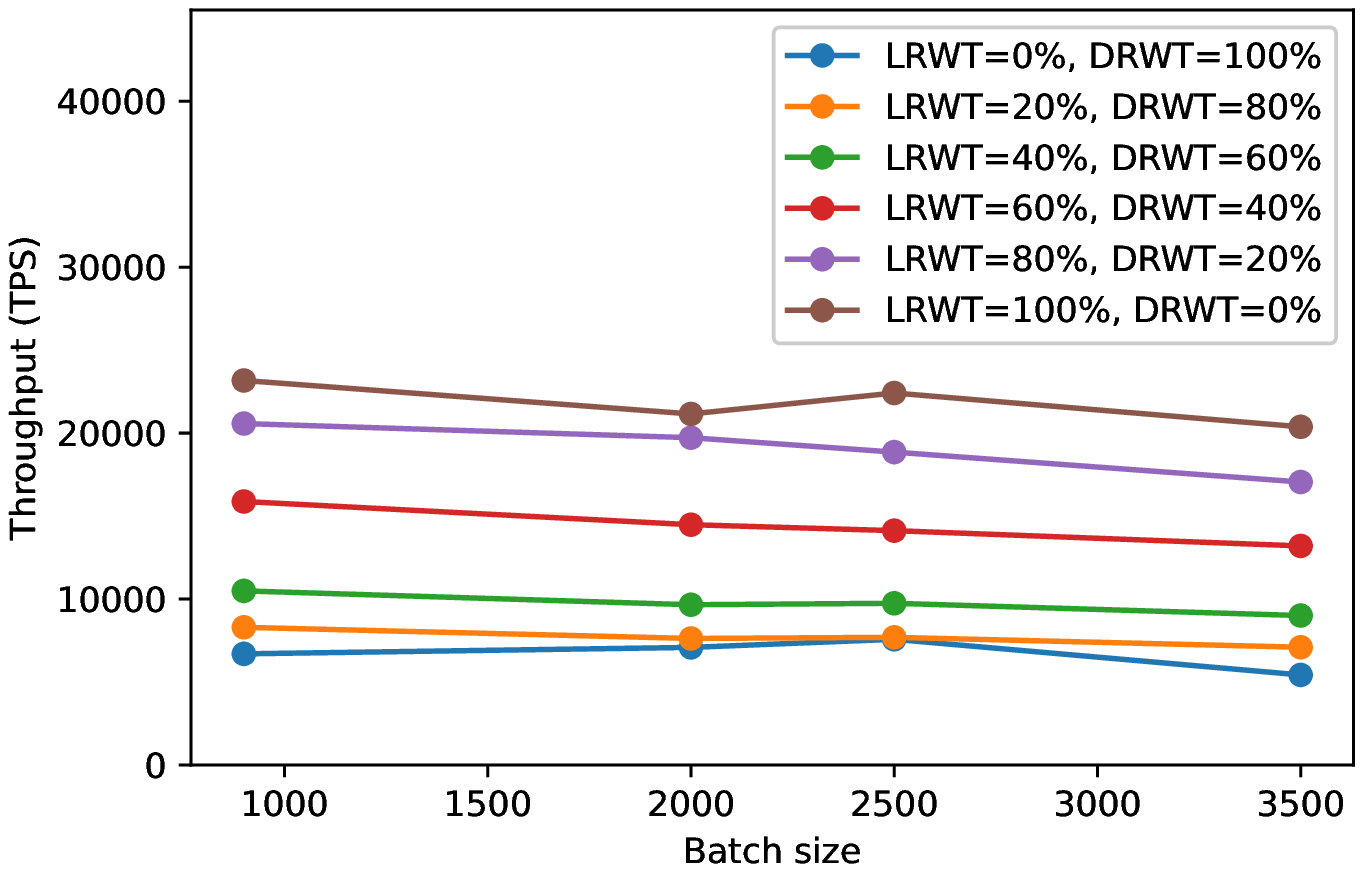}
\caption{Variation in throughput when the ratio of Local read-write and Distributed read-write transactions is changed in the workload.}
\label{fig:lrwt-drwt-skew}
\end{center}
\end{figure}

\begin{figure}[h]
\begin{center}
\includegraphics[scale=0.45]{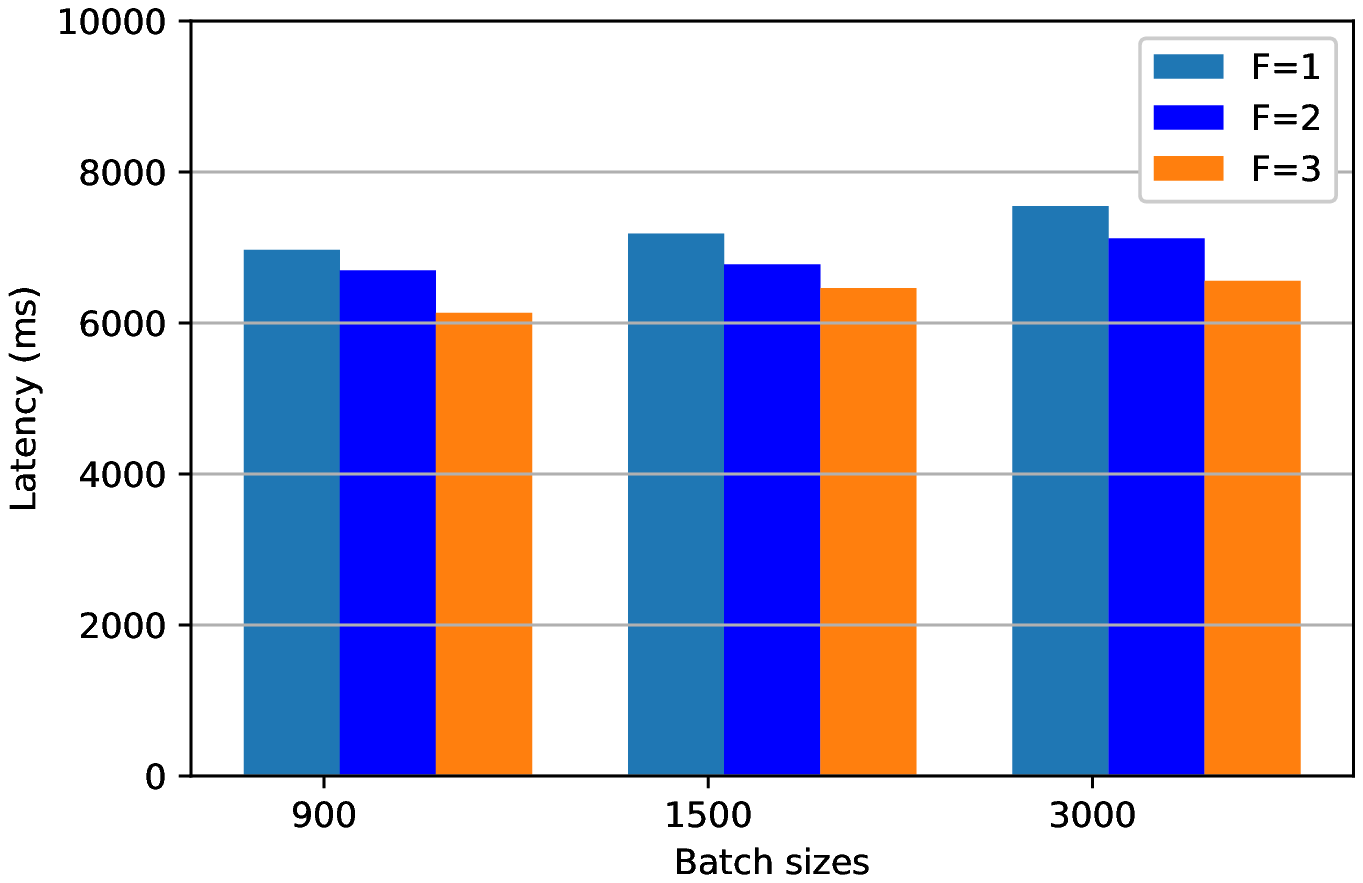}
\caption{Variation in throughput when the replicas per cluster are increased.}
\label{fig:drwt-multicluster}
\end{center}
\end{figure}


\cut{
At low values of batch sizes the latency of distributed transactions is seen to be very high. This is due to the number of batches queued for the two-phase commit phase at lower batch sizes. As the batch size increases the number of transactions processed per batch increases and this lowers the wait time for batches in the two-phase commit queue. As in the throughput graphs the latency at higher batch sizes does not degrade to the levels of those at lower batch sizes. The comparatively lower values of latency at higher batch sizes again shows the advantage of batching processing of distributed transactions.
}

\section{Related Work}\label{sec:related}
\cut{\vspace{0.1in}}

\cut{In this section, we overview related work in the areas of
byzantine agreement and trusted transaction processing.}

\subsection{Byzantine Agreement}
\label{sub:related-byz}
\cut{\vspace{0.1in}}

The byzantine agreement problem was proposed in the early
1980s~\cite{lamport1982byzantine,pease1980reaching}. A notable
milestone since then has been the proposal of the PBFT
protocol~\cite{castro1999practical} that we build upon in this paper.
In
the decade following the publication of PBFT, there has been a
resurgence of byzantine fault-tolerance
protocols~\cite{kotla2007zyzzyva,cowling2006hq,abd2005fault,kotla2004high}.
Byzantine agreement is getting renewed interest due to its
applications in blockchain
technology~\cite{nakamoto2008bitcoin,eyal2016bitcoin,gilad2017algorand,miller2016honey,kogias2016enhancing,pass2017hybrid,mazieres2015stellar}.
Byzantine agreement is especially relevant for \emph{permissioned
blockchain} where the set of writers to the blockchain are known but
potentially malicious. The interest in permissioned blockchain
technologies is due to various data management applications, such as
supply-chain management. This led to a number of
permissioned blockchain
systems~\cite{androulaki2018hyperledger,quorum,chain,parity,ripple}.
Because these are permissioned blockchain systems, they can use
traditional byzantine agreement protocols as their
agreement/consensus component. 

\cut{
BFT-SMaRt\cite{bft-smart-lib} is an open source Java library which provides byzantine fault tolerant consensus and state-machine replication services to applications. It has been used by several works\cite{al2017chainspace,androulaki2018hyperledger} for testing prototypes and has been used in TransEdge for prototype implementation. We rely on the version of PBFT implemented by BFT-SMaRt library for supporting byzantine fault tolerance in TransEdge clusters.
}

\cut{
GBFT's contribution compared to other BFT systems is that it is
designed for a GEDM environment with a large number of nodes that are
distributed around the world. GBFT proposes a locality-aware
hierarchical design and an asymmetric hybrid fault-tolerance model
that leverages the trusted cloud node. 
The interest in global-scale
systems~\cite{spanner,BBC11,noghabi2016ambry} motivated work on
consensus~\cite{L01,lamport2006fast,howard2016flexible} across
wide-area links~\cite{paxoscp,mdcc,moraru2013there,dpaxos,wpaxos}.
Unlike these aforementioned works, GBFT tolerates byzantine failures.
}

In global-scale environments, where nodes are separated by wide-area
latency, BFT systems incur significant overhead due to the many
rounds of communication needed to commit a request. To overcome this,
hierarchical BFT systems were
proposed~\cite{amir2010steward,nawab2019blockplane,amiri2019parblockchain,gupta13resilientdb}.
%
TransEdge is closest to this body of work. Its contribution to
hierarchical BFT systems is a design that is centered around
providing efficient read-only transaction processing that is
commit-free and non-interfering.

One of the design issues faced by hierarchical BFT systems is the grouping of nodes to form partitions. Mechanisms for nodes grouping depend on the use case and the considered system environment. These mechanisms are typically through manual administrator intervention, a placement/configuration protocol, or a distributed/decentralized membership mechanism. The goal of these grouping mechanisms is to ensure that no more than $f$ malicious nodes can exist in each group. For system environments that are permissioned~\cite{gupta13resilientdb,nawab2019blockplane,amiri2019parblockchain,hellings2021byshard}, this is ensured by making each cluster have no more than $f$ nodes that are independent. Independence here refers to the property that a failure (byzantine or otherwise) of one of the $f$ nodes is not going to lead to the failure of another node in the cluster. Ensuring this is application specific and can be performed during setup by authenticating each permissioned participant. In open membership system environments, various methods can be used to select a group of nodes that probabilistically guarantees that no more than $f$ malicious nodes are selected. This includes using reputation-based byzantine mechanisms~\cite{lei2018reputation,yuan2021efficient,zegers2020reputation} that can be utilized to select a grouping of nodes based on their past behavior (reputation).
Another method utilizes randomized methods such as Verified Random Functions (VRF)~\cite{gilad2017algorand,micali1999verifiable} where $m$ nodes are randomly selected in a decentralized way from a larger pool of $n$ nodes. These methods, however, need to be adapted to hierarchical latency-sensitive systems to balance the randomness of the grouping of nodes and the proximity of nodes in a cluster. Additionally, they need to be extended to enable selecting multiple clusters, one for each shard of the data.


\subsection{Read-only Transactions}

Read-only transactions have been a topic of interest for a long time~\cite{garciamolina1982, satyanarayanan1993efficient, lu2020performance, DBLP:journals/corr/abs-2107-11144,lloydSnow2016}. TransEdge builds on these works to construct a consistent read-only transaction algorithm suited for byzantine fault-tolerant systems. Recently, there have been some advances to try to formalize the properties of read-only transactions~\cite{lu2020performance,lloydSnow2016}. Most notable is the SNOW theorem~\cite{lloydSnow2016} that allows us to reason about the properties of read-only transactions in a distributed system. The SNOW impossibility result does not allow a system to simultaneously support all the SNOW properties. TransEdge supports non-blocking read-only transactions, read-write transactions are conflict-serializable and TransEdge allows read-write transactions to coexist with read-only transactions. However, TransEdge does not support one-round read-only transactions as TransEdge requires two rounds in the worst case to execute consistent read-only transactions. 
What distinguishes TransEdge is its focus on a byzantine environment, where nodes can act maliciously. 
TransEdge provides mechanisms to ensure the authenticity of responses that involve the use of authenticated data structures that verifies the integrity of responses. Doing this by itself is insufficient as the performance overhead can be high. TransEdge extends these trusted mechanisms with dependency tracking to enable fast and efficient processing while maintaining the integrity of responses.

Augustus~\cite{padilhaaugustus2013} is a system that deals with similar challenges to TransEdge: serializable transactions in a BFT environment and support of read-only transactions. However, Augustus uses shared locks for read-only transactions, causing read-only transactions to interfere with read-write transactions. TransEdge, on the other hand, does not use locks during read-only transactions and therefore ensures non-interference (in terms of conflicts) with read-write transactions. Augustus requires voting from participating replicas which adds to the overhead of read-only transactions. TransEdge requires the response from a single node per partition and does not need to involve other participants in the read-only transaction.


\subsection{Trusted Transaction Processing}
\label{sub:related-trusted}
TransEdge's trusted transaction processing gets inspiration from prior
work in using Authenticated Data Structures (ADSs)~\cite{merkle1980protocols} in
transaction processing. 
%
ADSs are data structures that are capable of providing a proof of the
authenticity of the stored data. 
ADSs have been used for databases
as a solution to the problem of outsourcing databases to public cloud
providers. Some of these solutions focus on query
processing~\cite{zhang2017vsql}. More related to TransEdge is usage of ADSs for
trusted transaction processing~\cite{jain2013trustworthy}. Unlike prior work in ADSs for
transaction processing, TransEdge tackles a different system
model where the untrusted nodes are many edge nodes around the world
instead of a node in the cloud. In terms of functionality, TransEdge
supports updating the ADS through a set of untrusted nodes using BFT
replication, whereas prior solutions rely on a trusted node that
recomputes the Merkle tree (an infeasible design for TransEdge since a
trusted node is not available.) 

BlockchainDB\cite{blockchaindb2019} is another work that deals with building a scalable database on top of a blockchain layer. BlockchainDB supports eventual and sequential consistency as the system architecture does not support serializable transactional workloads. BlockchainDB allows clients to verify if the operation is executed on the blockchain. This verification process requires querying the majority of the peers on the network. In TransEdge, the execution of read-write transactions ensures that signatures are shared across participating nodes and are part of the log. Thus, we enable single-round read verification however at the cost of storage resulting in a much faster verification process.

\cut{
Recently, the interest in permissioned blockchain led to an interest
in building transactional abstraction on top of BFT
replication~\cite{}. The closes work to ours is Chainspace~\cite{},
which---like TransEdge---layers 2PC over a byzantine replication
component. TransEdge---unlike Chainspace---supports efficient snapshot
read-only transactions, whereas Chainspace has no special support for
read-only transactions. Furthermore, TransEdge utilizes GBFT that
enables efficient and flexible operation of BFT clusters in GEDM,
whereas Chainspace utilizes an existing BFT replication layer with no
such support.
}

\section{Conclusion}
\label{sec:conclusion}

In this paper, we introduce TransEdge, a trusted distributed transaction processing protocols for Global-Edge Data Management (GEDM). TransEdge's main goal is to provide efficient support for snapshot read-only transactions. To this end, TransEdge builds on hierarchical BFT systems and extends them with dependency tracking mechanisms that are trusted. This involves redesigning hierarchical commit protocols and augmenting and managing meta-information such as dependency vectors and the use of Authenticated Data Structures (ADSs). Our evaluation shows that TransEdge can perform distributed read-only transactions efficiently and $9x-24x$ faster 
than running them as regular transactions.
\vspace{0.1in}
\section{Acknowledgments}

This research is supported in part by the NSF under grant CNS-1815212 and a gift from Facebook.

\bibliographystyle{abbrv}
\bibliography{citations}

\end{document}